\newcommand{\var}[1]{\text{\lstinline+#1+}}
\newtheorem*{theorem*}{Theorem}
\def\cA{\ensuremath{{\mathcal A}}}   \def\cB{\ensuremath{{\mathcal B}}}   \def\cC{\ensuremath{{\mathcal C}}}
\newcommand{\set}[1]{\left\{ #1 \right\}}
\mathchardef\mhyphen="2D
\newcommand{\eg}{\emph{e.g.}\xspace}
\newcommand{\etal}{\emph{et al.}\xspace}
\newcommand{\eur}{\text{\euro{}}}
\newcommand{\usd}{{\$}}
\newcommand{\gbp}{{\text{\pounds}}}
\newcommand{\Reals}{\ensuremath{\mathbb{R}}}
\title{Defensive Rebalancing in Networks of Automated Market Makers}
\titlerunning{Defensive Rebalancing in Networks of AMMs} % Optional short title
\authorrunning{S. Devorsetz and M. Herlihy}
\author{Sam Devorsetz}{Computer Science Dept., Brown University, Providence, USA}{alexander_devorsetz@alumni.brown.edu}{}{}
\author{Maurice Herlihy}{Computer Science Dept.,rown University, Providence, USA}{maurice.herlihy@gmail.com}{https://orcid.org/0000-0002-3059-8926}{}
\keywords{Automated Market Makers, Arbitrage, Convex Optimization} % Replace with actual keywords
\begin{document}
\nolinenumbers

\maketitle
\thispagestyle{empty} % Removes the page number from the title page

\begin{abstract}
This paper introduces and analyzes \emph{defensive rebalancing},
a novel mechanism for preventing arbitrageurs from extracting surplus from networks of constant-function market makers (CFMMs).

A \emph{rebalancing} transfers assets directly from one CFMM's pool to another's,
bypassing the CFMMs' standard trading protocols.
In any \emph{arbitrage-prone} configuration,
we prove there exists a rebalancing to an \emph{arbitrage-free} configuration
that strictly increases the liquidity of at least one participating CFMM 
without decreasing the liquidity of any other participating CFMM.
Moreover,
we prove that a configuration is arbitrage-free if and only if it is
\emph{Pareto efficient} under rebalancing.

We prove that for any log-concave trading function, 
including the ubiquitous constant-product market maker, 
finding an optimal arbitrage-free rebalancing that maximizes the aggregate liquidity of the participating CFMMs while preserving each participant's liquidity 
can be formulated as a convex optimization problem 
with a unique, computationally tractable solution.

We extend this framework to \emph{mixed rebalancing},
where a subset of participating CFMMs
use a combination of direct transfers and standard trades
to transition to an arbitrage-free configuration 
while capturing arbitrage opportunities from non-participating CFMMs 
and external price-setting oracles such as centralized exchanges.

Our results provide a rigorous foundation for future CFMM protocols that coordinate defensive rebalancing to internalize arbitrage surplus within networks of CFMMs.
\end{abstract}
\newpage
\section{Introduction}
\label{intro}
A \emph{constant-function automated market maker}\cite{AngerisC2020} (CFMM)
$\cA$ is an automaton that trades one token for another.
It owns two pools of tokens,
let's call them dollar tokens (\usd) and euro tokens (\eur).
$\cA$ holds $x_1$ dollars in its dollar pool,
and $x_2$ euros in its euro pool.
Token exchange rates are determined by a \emph{trading function} $F(x_1, x_2)$,
satisfying some common-sense technical conditions given below.
Initially $F(x_1, x_2) = k$, for some $k > 0$, called $\cA$'s \emph{liquidity}.
A trader who transfers $\delta_\usd$ dollars to $\cA$ receives $\delta_\eur$ euros in return,
where $F(x_1+\delta_\usd,x_2-\delta_\eur) = k$.
That is, a \emph{trade} changes $\cA$'s pool sizes but leaves its
liquidity unchanged.

Consider the following \emph{circular trade} scenario.
There are three CFMMs,
each trading one token type for another.
$\cA$ trades dollars for euros,
$\cB$ trades euros for pounds (\gbp),
and $\cC$ trades pounds for dollars.
All three have trading function $F(x_1, x_2) := x_1 \cdot x_2 = 3$.

\begin{itemize}
  \item $\cA$ has \eur 1 and \usd 3,
  \item $\cB$ has \gbp 1 and \eur 3, and
  \item $\cC$ has \usd 1 and \gbp 3.
\end{itemize}

Alice (an \emph{arbitrageur}) sees an opportunity to make a quick risk-free profit.
She sells \usd 1 to $\cC$ for \gbp $\frac{3}{2}$,
which she sells to $\cB$ for \eur $\frac{9}{5}$,
which she sells to $\cA$ for \usd $\frac{27}{14}$,
collecting a profit of \usd $\frac{13}{14}$.
Her profit comes at the expense of the \emph{liquidity providers} (LPs),
the investors who loaned their tokens to $\cA$, $\cB$, and $\cC$.

Suppose instead that before Alice strikes,
the LPs for $\cA, \cB$ and $\cC$, recognizing their danger,
agree to transfer \eur 1 directly from $\cB$'s euro pool to $\cA$'s euro pool,
\gbp 1 from $\cC$'s pound pool to $\cB$'s pound pool,
and \usd 1 from $\cA$'s dollar pool to $\cC$'s dollar pool.
These are direct pool-to-pool transfers,
bypassing the CFMMs' usual trading protocols.

The state resulting from this \emph{rebalancing} has
three interesting properties.
First, arbitrage across $\cA,\cB,\cC$ has become impossible:
no circular trade results in a profit.
Second, rebalancing has improved the liquidity levels of
\emph{all three} CFMMs (each went from 3 to 4),
without reducing the value of any CFMM's assets (in a sense formalized below).
Third,
no subsequent rebalancing can increase one CFMM's liquidity without reducing another's.
The first condition improves social welfare by harmonizing exchange rates;
the second improves individual welfare because higher liquidity
implies lower price slippage, making that CFMM more attractive to traders.
The third condition means individual welfare cannot be further improved.

This paper shows that the first and third conditions are equivalent,
and the second follows as a consequence of the others.
As explained in \Cref{model},
a \emph{configuration} is the global state of a network of CFMMs.
A configuration is \emph{arbitrage-free} if no agent
can make a profit simply by executing a sequence of trades.
As noted, a \emph{rebalancing} is a set of direct transfers from one CFMM pool 
directly to another.
A configuration $c$ is \emph{Pareto efficient} (under rebalancing)
if any rebalancing starting from $c$ that increases one CFMM's liquidity must reduce another's.
In \Cref{pareto}, we show that
\emph{the arbitrage-free configurations are
precisely the Pareto-efficient configurations}.
In short, the Pareto-efficient configurations
are exactly those that eliminate arbitrage opportunities while
preserving or improving every participating CFMM's liquidity.

This equivalence has several implications.
\begin{itemize}
\item
  From any arbitrage-prone configuration,
  there exist rebalancings that simultaneously eliminate arbitrage exposure
  and strictly increase some CFMMs' liquidities without decreasing the others'.
\item
  One could provide ``rebalancing as a service'' where
  a set of participating CFMMs periodically check whether
  their configuration is arbitrage prone.
  If so, they agree on a rebalancing that simultaneously
  protects them from arbitrage
  and renders their liquidity levels Pareto efficient.

\item
  \Cref{convex} shows that for any log-concave trading function,
  the problem of finding a rebalancing leading to an arbitrage-free/Pareto-efficient
  configuration can be formulated as a convex optimization problem,
  computationally tractable using widely available solvers.
  Moreover, the equivalence between arbitrage-free and Pareto-efficient
  configurations simplifies the optimization program's formulation.
\item 
  \Cref{mixed} shows that this convex optimization program can be readily extended to
  systems where some CFMMs refuse to allow direct pool-to-pool transfers,
  and some act like powerful centralized exchanges,
  setting global market prices for certain tokens.
\end{itemize}

Our analysis assumes that a proposed rebalancing is executed
before the corresponding arbitrage opportunity has been exhausted by external traders. 
We focus on the existence and properties of such rebalancings rather than on the mechanisms required to schedule or prioritize them.

\section{Model}
\label{model}
A function $F: \mathbb{R}^k \to \mathbb{R}$ is \emph{strictly convex}
if for all $t \in (0,1)$ and distinct $x_1, x_2$,
\begin{equation*}
F(t x_1 + (1-t) x_2) < t F(x_1) + (1-t) F(x_2).
\end{equation*}
A positive-valued function $F$ is \emph{log-concave}
if for all $t \in (0,1)$ and distinct $x_1, x_2$,
\begin{equation*}
\log F(t x_1 + (1-t) x_2) > t \log F(x_1) + (1-t) \log F(x_2).
\end{equation*}
We use ``='' for equality and ``:='' for definitions.

\subsection{Automated Market Makers}
\emph{Agents} own \emph{tokens}.
A token is any valuable item that can be transferred electronically.
An \emph{automated market maker} (AMM) is an automaton that
trades between two token types, $T_1, T_2$.
An AMM manages a \emph{pool} of each token.
An AMM's \emph{state} is a pair $(x_1, x_2)$,
meaning the AMM has $x_1$ units of $T_1$ in its first pool,
and $x_2$ units of $T_2$ in its second pool.\footnote{Our results generalize to AMMs that trade more than two types of tokens, but we focus on two-pool AMMs for simplicity.}
A \emph{configuration} $c$ is an ordered collection of AMM states.

We consider \emph{constant-function market makers}~\cite{AngerisC2020} (CFMMs)
in which each trade preserves an invariant of the form $F(x_1, x_2) = k$,
where $k > 0$, and $F: \mathbb{R}^2 \to \mathbb{R}$
is twice differentiable, strictly increasing in each argument, and log-concave.
The value $k$ is called the CFMM's \emph{liquidity}.\footnote{This term is sometimes called the CFMM's \emph{invariant} or \emph{liquidity constant}. In our model, however, it is neither invariant nor constant over the course of repeated rebalancings.}
Perhaps the most popular CFMM is the
\emph{constant-product market maker} (CPMM) of the form
$F(x_1, x_2) := x_1 \cdot x_2 = k$.
Uniswap~\cite{uniswapv3}, Balancer~\cite{balancer}, Bullish~\cite{Bullish},
CoinEx~\cite{CoinEx2025}, and Skate~\cite{SkateAMM2025}
are examples of commercial CPMMs.

A \emph{liquidity provider} (LP) is an agent who invests in a CFMM
by loaning tokens to its liquidity pools.
LPs expect to profit from trading fees,
and typically receive governance rights over the CFMM.

For an AMM holding $x_\usd$ dollars and $x_\eur$ euros,
a \emph{trade} is a pair $(\delta_\usd, \delta_\eur)$ such that
\begin{equation*}
F(x_\usd + \delta_\usd, x_\eur - \delta_\eur) = F(x_\usd, x_\eur).
\end{equation*}
Thus, if a trader sends $\delta_\usd$ dollars to the AMM,
it receives $\delta_\eur$ euros in return,
where $\delta_\eur$ is chosen so that the liquidity remains unchanged.
Formally,
the AMM's possible states under trading
correspond to a \emph{level set} of $F$:
\begin{equation*}
\{ (x_1, x_2) \mid F(x_1, x_2) = k \},
\end{equation*}
where the liquidity $k$ remains constant for the duration of the trade.

For a CFMM with trading function $F(x_1, x_2)$, the \emph{spot price} of
token $T_1$ in terms of $T_2$ in state $(x_1, x_2)$ is:
\begin{equation*}
-\frac{\partial F / \partial x_1}{\partial F / \partial x_2}(x_1, x_2).
\end{equation*}
Trades change spot prices, an effect known as \emph{slippage}.
For example,
a CPMM $x_1 x_2 = 2$ in state $(1, 2)$ has a dollar spot price of $2$ euros,
but a trader who deposits $2$ euros changes the state to $(\frac{1}{2}, 4)$,
and receives only $\frac{1}{2}$ dollar.
The trade increased the spot price as dollars become more scarce compared to euros.

Consider two CFMMs $\cA, \cB$, both trading dollars against euros.
If $\cA$ and $\cB$ have different spot prices,
then they are vulnerable to \emph{arbitrage}.
Suppose euros are cheaper in $\cB$ than in $\cA$.
An \emph{arbitrageur} agent spends $d$ dollars
to buy euros from $\cB$,
then sells those euros to $\cA$ where they are more valuable,
receiving $d' > d$ dollars in return.
The agent has made a risk-free profit at the expense of the CFMMs' LPs.
Arbitrage can of course involve multiple AMMs.

\begin{definition}
    A configuration is \emph{arbitrage-prone} if an agent starting
    with a basket of tokens $(b_1, \ldots, b_m)$ can execute a sequence
    of trades and end up with a basket $(b_1', \ldots, b_m')$
    where $b_i' \geq b_i$, and at least one inequality is strict.
    A configuration that is not arbitrage-prone is \emph{arbitrage-free}.
\end{definition}

A \emph{rebalancing} transfers tokens directly between liquidity pools
belonging to different CFMMs.
Unlike trades,
rebalancing typically changes both AMMs' liquidities and their spot prices.
A rebalancing is self-funding in the sense that it does not change
the total amounts of each token held in participants' pools.
A rebalancing \emph{improves liquidity}
if every AMM's post-rebalancing liquidity is at least its original liquidity, and at least one such inequality is strict.

Note that trades and rebalancings are distinct operations.
A trade takes place between a party and a CFMM, and always leaves the AMM's liquidity unchanged.
A rebalancing takes place among multiple CFMMs, and typically alters their liquidities.
Trades are a standard part of classical AMMs,
while rebalancings are (as far as we know) analyzed for the first time in this paper.

Formally,
a rebalancing is defined as follows.
Starting from a configuration
\begin{equation*}
c := (x_{1,1}, x_{1,2}), \ldots, (x_{n,1}, x_{n,2}),
\end{equation*}
a rebalancing carries $c$ to a new configuration
\begin{equation*}
c' := (x_{1,1}', x_{1,2}'), \ldots, (x_{n,1}', x_{n,2}').
\end{equation*}
Tokens can be transferred only between pools holding the same token type.
We encode permitted transfers in the index set:
\begin{equation*}
\mathcal{T} := \left\{ (i,j,k,\ell) \mid i < k \text{ and the pools corresponding to } x_{i,j}, x_{k,\ell} \text{ have the same token type} \right\}.
\end{equation*}
A rebalancing is defined by a collection:
\begin{equation*}
\left\{ \delta_{i,j,k,\ell} \mid (i,j,k,\ell) \in \mathcal{T} \right\}.
\end{equation*}
Each $\delta_{i,j,k,\ell}$ represents a transfer of $|\delta_{i,j,k,\ell}|$ tokens.
If $\delta_{i,j,k,\ell} > 0$, then $|\delta_{i,j,k,\ell}|$ tokens are transferred
from the pool corresponding to $x_{i,j}$ to the pool corresponding to
$x_{k,\ell}$; if $\delta_{i,j,k,\ell} < 0$, the transfer is in the opposite
direction.
More precisely, the rebalancing sends each pool $x_{i,j}$ to $x_{i,j}'$, where:
\begin{equation*}
x_{i,j}' = x_{i,j} + \sum_{\substack{k,\ell:\\(k,\ell,i,j)\in \mathcal{T}}} \delta_{k,\ell,i,j} - \sum_{\substack{k,\ell:\\(i,j,k,\ell)\in \mathcal{T}}} \delta_{i,j,k,\ell}.
\end{equation*}
A rebalancing is \emph{feasible} if all final pool balances are positive.
All rebalancings considered here are assumed to be feasible.

\begin{definition}
A configuration $c$ is \emph{Pareto efficient} if no rebalancing starting from $c$ can strictly increase the liquidity of any CFMM without strictly reducing the liquidity of another CFMM.
\end{definition}

Protecting a network of CFMMs from arbitrage is a public good that benefits the network as a whole,
although those benefits may be distributed unevenly among the individual CFMMs.
Because a given arbitrage-prone configuration typically admits a continuum of
Pareto-efficient rebalancings,
there is considerable freedom in selecting a target configuration.
For example, in \Cref{convex},
we consider a rebalancing that selects the Pareto-efficient configuration
maximizing the product of the squares of the CFMMs' resulting liquidities.
Each participating CFMM may communicate additional constraints to the rebalancing agent
(e.g., ``I must retain at least 100 euros'').
The agent, in turn, is free to exclude any CFMM whose individual constraints
make rebalancing infeasible.
Ultimately, however, participation is voluntary:
every CFMM retains veto power and may decline to participate
in any proposed rebalancing for any reason.

We assume each sequence of trades or rebalancings executes \emph{atomically},
so that no other agent's actions are interleaved with it.

Our results characterize the allocations that are achievable under rebalancing.
They do not address how a rebalancing transaction obtains priority over competing arbitrage transactions on a blockchain.

As discussed in \Cref{remarks},
rebalancing is more powerful than trading alone
for protection against arbitrage.
Standard CFMM trades do not improve liquidity.
Moreover, no self-funding sequence of trades can, in general, harmonize prices:
the agent executing those trades generally cannot eliminate arbitrage
while simultaneously breaking even.

\section{Arbitrage and Pareto Efficiency}
\label{pareto}
\begin{theorem}
  \label{arbitrage-prone}
  Arbitrage-prone configurations are not Pareto efficient under rebalancing.
\end{theorem}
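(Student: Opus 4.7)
My plan is to convert any arbitrage into a rebalancing. The key observation is that if one executes the \emph{same} net pool-to-pool token movements the arbitrage would cause, but deposits the would-be profit back into the CFMMs instead of handing it to an external agent, then every CFMM ends up either on its original level set (no liquidity change) or strictly above it (strictly higher liquidity). Monotonicity of $F_i$ in each argument does all the work.

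Concretely, I would first unpack the arbitrage-prone hypothesis: there is a sequence of trades at various CFMMs taking an agent's basket $(b_1,\ldots,b_m)$ to $(b_1',\ldots,b_m')$ with $b_i'\ge b_i$ and at least one strict inequality. Let $(\delta_{i,1},\delta_{i,2})$ be the net change in $\cA_i$'s two pools aggregated across the trades happening at $\cA_i$. Because each individual trade at $\cA_i$ leaves $F_i$ invariant, the composition does too:
\begin{equation*}
F_i(x_{i,1}+\delta_{i,1},\, x_{i,2}+\delta_{i,2}) \;=\; F_i(x_{i,1},\, x_{i,2}).
\end{equation*}
For each token type $T$, let $s_T$ denote the arbitrageur's net take of $T$, equivalently the drop in total $T$ held by the CFMMs under these trades. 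Profitability forces $s_T\ge 0$ for every $T$, with strict inequality for at least one $T^{\star}$.

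Next, I would build the rebalancing by adding $\epsilon_{i,j}\ge 0$ tokens back to the pools, with the weights in each token class $T$ summing to $s_T$; for instance, all of $s_{T^{\star}}$ can be dumped into a single pool of that type. The total per-pool change $\Delta_{i,j}:=\delta_{i,j}+\epsilon_{i,j}$ then sums to zero within each token type, so a standard zero-divergence-flow decomposition expresses $\set{\Delta_{i,j}}$ as a collection of pairwise pool-to-pool transfers $\delta_{i,j,k,\ell}$ matching the paper's formal rebalancing definition. Feasibility is inherited: $x_{i,j}+\delta_{i,j}>0$ because the arbitrage trades themselves had to be executable, and $\epsilon_{i,j}\ge 0$ keeps the post-rebalancing pools positive. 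For liquidity, at each $\cA_i$ we sit at the post-trade point on $\cA_i$'s original level set and then move by the non-negative vector $(\epsilon_{i,1},\epsilon_{i,2})$; strict monotonicity of $F_i$ in each argument gives a weak increase in $F_i$, strict wherever $(\epsilon_{i,1},\epsilon_{i,2})\ne (0,0)$. Since $s_{T^{\star}}>0$ forces at least one pool to receive a strict addition, some CFMM strictly improves while none worsens, witnessing that $c$ is not Pareto efficient.

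The step I expect to be trickiest is not the monotonicity comparison, which is immediate, but the bookkeeping in two small translations: (i) reducing an agent's sequential trade log to well-defined aggregate per-pool changes that still lie on the original level set, which is a telescoping argument using trade-invariance of $F_i$ under composition at the same CFMM; and (ii) verifying that a zero-divergence table of per-pool changes decomposes into the pairwise transfers $\delta_{i,j,k,\ell}$ demanded by the formal definition, which follows from any matching or cycle decomposition within each token class. Both moves are elementary but deserve explicit checking, since the rest of the argument is a clean appeal to strict monotonicity of $F_i$.
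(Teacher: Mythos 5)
Your proposal is correct and follows essentially the same route as the paper: simulate the arbitrageur's trade sequence as direct pool movements (leaving every $F_i$ on its original level set) and then deposit the extracted profit back into pools, so strict monotonicity of the $F_i$ yields a liquidity-improving rebalancing. The extra bookkeeping you flag --- aggregating per-CFMM net changes and decomposing the zero-sum per-pool deltas into the pairwise transfers $\delta_{i,j,k,\ell}$ --- is a welcome tightening of steps the paper leaves implicit, but it is not a different argument.
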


\begin{proof}
  Let 
\begin{equation*}
c := (x_{1,1},x_{1,2}), \ldots, (x_{n,1},x_{n,2})
\end{equation*}
be an arbitrage-prone configuration.
We construct a rebalancing that leads to a configuration
\begin{equation*}
c' := (x_{1,1}',x_{1,2}'), \ldots, (x_{n,1}',x_{n,2}')
\end{equation*}
  where $F_i(x_{i,1},x_{i,2}) \leq F_i(x_{i,1}',x_{i,2}')$ for $i \in \set{1,\ldots,n}$,
  and at least one inequality is strict,
  contradicting the hypothesis that $c$ is Pareto efficient.
  
  Because $c$ is arbitrage prone by hypothesis,
  an arbitrageur starting with a basket of tokens $(b_1, \ldots, b_m)$
  can execute a sequence of trades,
  ending with a basket $(b_1', \ldots, b_m')$,
  where each $b_j' \geq b_j$, for $j \in \set{1,\ldots, m}$,
  and at least one inequality is strict.
  
  To construct a rebalancing,
  borrow a basket $(b_1, \ldots, b_m)$,
  transfer tokens among the pools in a way that mimics
  that arbitrageur's sequence of trades,
  producing $(b_1', \ldots, b_m')$,
  where each $b_j' \geq b_j$,
  and at least one inequality is strict.
  Repay the loan, leaving the basket $(b_1'-b_1, \ldots, b_m'-b_m)$.
  Recall that each of these trades leaves each CFMM's liquidity unchanged.
  
 For each token type $T_k$ for which $b_k' - b_k > 0$,
  choose any CFMM that trades token type $T_k$,
  and transfer $b_k'-b_k$ tokens of type $T_k$ to that pool.
  This transfer strictly increases that CFMM's liquidity.
  Repeating this procedure for every token type with $b_k'>b_k$ 
  produces a rebalancing in which every CFMM's liquidity is at least its
  original value and at least one CFMM's liquidity is strictly greater,
  contradicting the hypothesis that $c$ is Pareto efficient.
\end{proof}
\begin{definition}
A \emph{valuation} for a network of CFMMs trading token types
$T_1,\ldots,T_m$ is a vector
$V=(V_1,\ldots,V_m)\in\mathbb{R}_{>0}^m$,
where $V_i$ is the value assigned to one unit of token $T_i$.
\end{definition}

\begin{definition}
  A valuation $V=(V_1,\ldots,V_m)$ is a \emph{global valuation}
  for a configuration if, for every CFMM that trades token types $T_p$ and $T_q$,
  \begin{equation*}
    \frac{V_p}{V_q}=P_{pq},
  \end{equation*}
  where $P_{pq}$ is the spot price of $T_p$ measured in units of $T_q$.
\end{definition}

The proof of the following lemma appears in the appendix.
\begin{lemma}
  \label{lemma:valuation}
  Any arbitrage-free configuration has a global valuation.
\end{lemma}

The remainder of this section uses the global valuation to assign a linear value to each CFMM. 
We show that increasing a CFMM's liquidity necessarily increases its value, 
while the total value of all CFMMs is preserved under rebalancing.

In an arbitrage-free configuration with global valuation $V$,
define the \emph{value} of the CFMM state $(x_{i,1},x_{i,2})$,
which trades token types $T_j$ and $T_k$, by
\begin{equation*}
  \Phi_V(x_{i,1},x_{i,2})
  := V_jx_{i,1} + V_kx_{i,2}.
\end{equation*}
The tangent line to the level set
\begin{equation*}
\set{(X_1,X_2) \;|\; F_i(X_1,X_2) = F_i(x_{i,1},x_{i,2})}
\end{equation*}
at point $ (x_{i,1} ,x_{i,2})$ is:
\begin{equation}
\label{Li}
  L_i := \set{(X_1,X_2) \;|\; \Phi_V(X_1,X_2) = \Phi_V(x_{i,1} ,x_{i,2})}.
\end{equation}
The tangent $L_i$ has slope $-\frac{V_j}{V_k}$, 
and divides $\Reals^2_{>0}$ into two half-planes.
Points in the half-plane below $L_i$ have lower value than $\Phi_V(x_{i,1} ,x_{i,2})$
and points above have higher values.
\begin{lemma}
  \label{val}
  $\Phi_V(x_{i,1}' ,x_{i,2}') < \Phi_V(x_{i,1} ,x_{i,2})$
  if and only if $(x_{i,1}' ,x_{i,2}')$ lies in the half-plane below $L_i$.
\end{lemma}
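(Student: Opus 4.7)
The plan is to exploit that $V$ is a linear functional on $\Reals^2$, so its level sets are parallel lines and the sign of $V(\cdot) - V(x_{i,1},x_{i,2})$ is constant on each open half-plane cut out by $L_i$. The lemma then amounts to identifying which of the two half-planes corresponds to the strictly smaller values, and verifying that this is indeed the one described as ``below'' $L_i$.

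First I would write $V(X_1,X_2) = V_j X_1 + V_k X_2$ and observe that, by construction,
\begin{equation*}
L_i = \set{(X_1,X_2) \;|\; V(X_1,X_2) - V(x_{i,1},x_{i,2}) = 0},
\end{equation*}
so that the complement of $L_i$ in $\Reals^2$ decomposes into the two open half-planes
\begin{equation*}
H^- := \set{(X_1,X_2)\;|\; V(X_1,X_2) < V(x_{i,1},x_{i,2})}, \quad
H^+ := \set{(X_1,X_2)\;|\; V(X_1,X_2) > V(x_{i,1},x_{i,2})}.
\end{equation*}
Thus $V(x_{i,1}',x_{i,2}') < V(x_{i,1},x_{i,2})$ holds exactly when $(x_{i,1}',x_{i,2}') \in H^-$, which reduces the lemma to the geometric claim that $H^-$ is precisely the half-plane ``below'' $L_i$.

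To pin down which half-plane is ``below'', I would use that $V_j, V_k > 0$ (the valuations are positive in any economically meaningful configuration, since an arbitrage-free configuration assigns strictly positive prices to all tradable tokens). The gradient $\nabla V = (V_j, V_k)$ then points strictly into the first quadrant, so moving from any point on $L_i$ in the direction of decreasing $X_2$ (holding $X_1$ fixed, or more generally moving into the half-plane on the origin side of $L_i$) strictly decreases $V$. Since $L_i$ has negative slope $-V_j/V_k$, the conventional ``below'' half-plane $\{X_2 < -\tfrac{V_j}{V_k} X_1 + c\}$ (where $c$ is the intercept determined by $V(x_{i,1},x_{i,2})$) is exactly $H^-$.

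The only subtlety is the sign convention for ``below'': I need to make sure that ``below'' is interpreted consistently with the orientation of the $(X_1,X_2)$-plane used in the preceding paragraph of the paper, where ``points below the tangent have valuations lower than $V(x_{i,1},x_{i,2})$'' is asserted informally. The argument above makes this precise, and no other step is genuinely difficult; the entire proof is two lines of linear algebra once the notation is unpacked.
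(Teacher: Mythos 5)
Your proof is correct and takes essentially the same route as the paper, whose entire proof is the one-line remark that the claim follows from the defining equation of $L_i$; you simply unpack that observation, noting that $V$ is linear so $L_i$ separates the plane into the sublevel and superlevel half-planes, and that positivity of $V_j,V_k$ identifies the sublevel set with the half-plane ``below'' the negatively sloped tangent. Your explicit handling of the sign convention for ``below'' is a reasonable elaboration of what the paper leaves implicit in the sentence preceding the lemma, but it is not a different argument.
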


\begin{proof}
  From \Cref{Li}.
\end{proof}

Every point with higher liquidity lies above $L_i$.
\begin{lemma}
  \label{above}
  The region $\set{(X_1,X_2)  \mid F_i(X_1,X_2) \geq F_i(x_{i,1}, x_{i,2})}$ lies entirely
  in the half-plane above $L_i$.
\end{lemma}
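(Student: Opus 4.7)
The plan is to exploit the log-concavity of $F_i$ to make its superlevel sets convex, identify $L_i$ with the tangent to the $F_i$-level curve through $(x_{i,1},x_{i,2})$, and then fix the correct half-plane using strict monotonicity of $F_i$.

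First I would show that $L_i$ coincides with the tangent to the level set $\{F_i = F_i(x_{i,1},x_{i,2})\}$ at the point $(x_{i,1},x_{i,2})$. Implicit differentiation of $F_i(X_1,X_2) = F_i(x_{i,1},x_{i,2})$ gives this tangent the slope $-\partial_1 F_i/\partial_2 F_i$ evaluated at $(x_{i,1},x_{i,2})$, i.e., the negative of the CFMM's internal marginal exchange rate between the two tokens it trades. Because the ambient configuration is arbitrage-free and $(V_j,V_k)$ is by construction consistent with every CFMM's spot prices, $V_j/V_k$ equals $\partial_1 F_i/\partial_2 F_i$ at $(x_{i,1},x_{i,2})$. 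Since $L_i$ has slope $-V_j/V_k$ by \Cref{Li} and also passes through $(x_{i,1},x_{i,2})$, it is precisely this tangent.

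Next, log-concavity of $F_i$ makes $\log F_i$ concave, so the superlevel sets of $F_i$ (which agree with those of $\log F_i$, since $\log$ is monotone) are convex. Thus $S := \{(X_1,X_2) \mid F_i(X_1,X_2) \geq F_i(x_{i,1},x_{i,2})\}$ is a convex region whose boundary passes through $(x_{i,1},x_{i,2})$, and the tangent line $L_i$ at that boundary point is therefore a supporting hyperplane; $S$ lies entirely in one of the two closed half-planes cut by $L_i$. The region stated in the lemma is contained in $S$ (under the intended reading that the state $(x_{i,1}',x_{i,2}')$ has $F_i$-value at least $F_i(x_{i,1},x_{i,2})$), so it lies in the same half-plane.

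The main obstacle is pinning down which half-plane. Strict monotonicity of $F_i$ in each argument gives $\nabla F_i(x_{i,1},x_{i,2})$ with strictly positive components, so a small perturbation of $(x_{i,1},x_{i,2})$ in any direction with strictly positive components lies inside $S$. Because $V_j, V_k > 0$, such a perturbation also strictly increases the linear functional $V_j X_1 + V_k X_2$, placing the perturbed point in the half-plane where valuation exceeds $V(x_{i,1},x_{i,2})$, which by \Cref{val} is the half-plane above $L_i$. Hence $S$, and every subregion of it, lies above $L_i$, completing the argument.
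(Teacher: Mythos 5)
Your proposal is correct and follows essentially the same route as the paper, whose entire proof is the one-line appeal ``from convexity of the level set curve'': you simply spell out the details (log-concavity gives convex superlevel sets, $L_i$ is a supporting line at $(x_{i,1},x_{i,2})$, and strict monotonicity of $F_i$ together with $V_j,V_k>0$ selects the upper half-plane). Your parenthetical handling of the primed/unprimed mismatch in the lemma's statement matches the intended reading.
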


\begin{proof}
Since the upper level set of a log-concave function is convex, 
the tangent line $L_i$ is a supporting line.
\end{proof}

\begin{corollary}
  \label{increase}
  Any rebalancing that increases the liquidity $F_i(x_{i,1}, x_{i,2})$
  must also increase the value $\Phi_V(x_{i,1}, x_{i,2})$.
\end{corollary}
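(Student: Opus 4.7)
The plan is to chain Lemma~\ref{above} with the contrapositive of Lemma~\ref{val}, since between them they already cover the two halves of the geometric picture: the tangent $L_i$ splits $\mathbb{R}^2_{++}$ into a low-valuation side and a high-valuation side, and the upper level set of $F_i$ sits entirely on the high-valuation side.

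Concretely, suppose a rebalancing takes pool $i$ from $(x_{i,1},x_{i,2})$ to $(x_{i,1}',x_{i,2}')$ with $F_i(x_{i,1}',x_{i,2}') > F_i(x_{i,1},x_{i,2})$. First, I would apply Lemma~\ref{above} with the threshold level set at $F_i(x_{i,1},x_{i,2})$ to conclude that $(x_{i,1}',x_{i,2}')$ lies in the half-plane above $L_i$. Then, reading Lemma~\ref{val} in its contrapositive form, points that do not lie below $L_i$ have valuation at least $V(x_{i,1},x_{i,2})$, which yields $V(x_{i,1}',x_{i,2}') \geq V(x_{i,1},x_{i,2})$.

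The one subtle step — and where I expect the main obstacle to lie — is promoting this weak inequality to the strict increase asserted by the corollary. The key observation is that strict log-concavity of $F_i$ (together with its strict monotonicity in both arguments, assumed in \Cref{model}) makes the level curve $\{F_i = F_i(x_{i,1},x_{i,2})\}$ strictly convex, so the tangent $L_i$ meets it at the single point $(x_{i,1},x_{i,2})$. Any point strictly inside the upper level set $\{F_i > F_i(x_{i,1},x_{i,2})\}$ must therefore lie \emph{strictly} above $L_i$, and Lemma~\ref{val} (again read contrapositively, as an equivalence between strict inequalities) then gives $V(x_{i,1}',x_{i,2}') > V(x_{i,1},x_{i,2})$, as required. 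If instead the corollary is read with weak inequalities throughout, this last step is unnecessary and the two preceding lemmas suffice directly.
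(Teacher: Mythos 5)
Your proof is correct and follows essentially the route the paper intends: Corollary~\ref{increase} is stated without an explicit proof precisely because it is meant to be read off from \Cref{above} combined with the contrapositive of \Cref{val}, exactly as you do. Your extra care in promoting the weak inequality to a strict one (via the tangent meeting the strictly convex level curve at a single point, or equivalently via the fact that a supporting line contains no interior point of the upper level set) is a welcome tightening of a step the paper glosses over.
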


\begin{corollary}
  \label{decrease}
  Any rebalancing that decreases the value $\Phi_V(x_{i,1}, x_{i,2})$
  must also decrease the liquidity $F_i(x_{i,1}, x_{i,2})$.
\end{corollary}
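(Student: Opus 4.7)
My plan is to obtain Corollary \ref{decrease} as an essentially immediate consequence of the two preceding lemmas, by the same contrapositive-style argument that drives Corollary \ref{increase}. The key observation is that Lemma \ref{val} and Lemma \ref{above} together provide a clean dichotomy at the tangent line $L_i$: valuations strictly below $V(x_{i,1}, x_{i,2})$ live in one open half-plane, while liquidities at or above $F_i(x_{i,1}, x_{i,2})$ live in the complementary closed half-plane.

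Concretely, I would proceed as follows. First, assume the rebalancing carries $(x_{i,1}, x_{i,2})$ to $(x_{i,1}', x_{i,2}')$ with $V(x_{i,1}', x_{i,2}') < V(x_{i,1}, x_{i,2})$. By Lemma \ref{val}, the post-rebalancing point $(x_{i,1}', x_{i,2}')$ lies strictly in the half-plane below the tangent line $L_i$. Second, invoke Lemma \ref{above}, which says that every point $(X_1, X_2)$ satisfying $F_i(X_1, X_2) \geq F_i(x_{i,1}, x_{i,2})$ lies in the opposite (upper) half-plane determined by $L_i$. Since the two half-planes are disjoint, $(x_{i,1}', x_{i,2}')$ cannot satisfy $F_i(x_{i,1}', x_{i,2}') \geq F_i(x_{i,1}, x_{i,2})$, so the liquidity must strictly decrease, which is the claim.

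I do not anticipate a genuine obstacle here: once Lemma \ref{val} and Lemma \ref{above} are in hand, the corollary is a one-line geometric deduction. The only thing I would double-check is that Lemma \ref{above} is applied at the correct reference point (the pre-rebalancing state $(x_{i,1}, x_{i,2})$, whose tangent defines $L_i$), so that the level set it uses is genuinely the one whose tangent line separates the two regions. Given that, the proof can be written in essentially two sentences, mirroring the proof of Corollary \ref{increase}.
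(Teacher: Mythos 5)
Your argument is correct and is exactly the deduction the paper intends: the corollary follows immediately from Lemma~\ref{val} and Lemma~\ref{above} via the half-plane dichotomy at $L_i$, and the paper offers no separate proof beyond this. Your caveat about applying Lemma~\ref{above} at the pre-rebalancing point is well taken, since the paper's statement of that lemma mixes primed and unprimed coordinates in what is evidently a typo.
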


\begin{theorem}
\label{arbitrage-free}
  Any arbitrage-free configuration is Pareto efficient under rebalancing.
\end{theorem}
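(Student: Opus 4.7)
The strategy is to extract a conservation law from the arbitrage-free hypothesis and pair it with the valuation--liquidity link supplied by \Cref{decrease} and \Cref{above}. Since $c$ is arbitrage-free, fix a valuation $[V_1,\dots,V_m]$ consistent with all CFMM spot prices. A rebalancing only shuffles tokens between pools of matching type, so for each token $T_j$ the total amount held across all pools is invariant, and hence the aggregate valuation $\sum_i V(x_{i,1},x_{i,2})$ is preserved by any rebalancing.

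Consider a rebalancing carrying $c$ to $c'$. Suppose for contradiction that no CFMM's liquidity strictly decreases, i.e.\ $F_i(x_{i,1}',x_{i,2}') \ge F_i(x_{i,1},x_{i,2})$ for every $i$. The contrapositive of \Cref{decrease} forces $V(x_{i,1}',x_{i,2}') \ge V(x_{i,1},x_{i,2})$ for every $i$. Combined with the conservation law, every one of these inequalities must in fact be an equality, so by \Cref{val} each $(x_{i,1}',x_{i,2}')$ lies on the tangent line $L_i$. Now invoke \Cref{above}: each $(x_{i,1}',x_{i,2}')$ sits both on $L_i$ and in the upper level set $\{F_i \ge F_i(x_{i,1},x_{i,2})\}$. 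Since $F_i$ is strictly log-concave, this upper level set is strictly convex and $L_i$ is a supporting line touching the level curve at the tangent point, so the intersection reduces to the single point $(x_{i,1},x_{i,2})$. Hence $(x_{i,1}',x_{i,2}') = (x_{i,1},x_{i,2})$ for every $i$, meaning the rebalancing leaves every pool unchanged --- a contradiction for any rebalancing that genuinely alters the configuration.

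The main obstacle is the strict-convexity step: one must strengthen \Cref{above} to argue that $L_i$ meets the closed upper level set of $F_i$ at \emph{exactly} the tangent point. This should follow from the paper's definition of log-concavity, which uses a strict inequality and hence makes $\log F_i$ strictly concave, forcing each upper level set to be strictly convex (no line segments on the boundary). A minor secondary subtlety is that the paper's feasibility definition technically permits cyclic transfers with net-zero change at every pool; such no-op rebalancings never strictly decrease any liquidity, so the theorem should be read as quantifying over rebalancings that actually modify the configuration.
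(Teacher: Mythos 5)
Your proof is correct and rests on the same machinery as the paper's --- a price-consistent valuation, conservation of the aggregate valuation under rebalancing, and the tangent-line lemmas (\Cref{val}, \Cref{above}, \Cref{decrease}) --- but you run the contradiction in the opposite direction, and this buys you something. The paper assumes some liquidity strictly increases, concludes via conservation that some valuation must strictly drop, and applies \Cref{decrease}; this handles improving rebalancings but says nothing about a rebalancing that moves tokens while holding every liquidity exactly constant, which under the paper's literal definition of Pareto efficiency (``every rebalancing strictly reduces liquidity for some CFMM'') would also be a counterexample. You instead assume no liquidity strictly decreases, force every valuation to be preserved exactly, and then use strict log-concavity to pin each new state $(x_{i,1}',x_{i,2}')$ to the unique point where $L_i$ meets the superlevel set of $F_i$, concluding the rebalancing is a pool-level no-op. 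That extra strict-convexity step is sound --- the paper's definition of log-concavity uses a strict inequality, so superlevel sets have no flat boundary segments and a supporting line touches only at the tangent point --- and it closes the degenerate case the paper's shorter argument leaves open. Your closing remark about feasible cyclic transfers with zero net effect at every pool is likewise a genuine definitional wrinkle that the paper does not address; the theorem must indeed be read as quantifying over rebalancings that actually change some pool balance.
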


\begin{proof}
  We will show that from an arbitrage-free configuration,
  any rebalancing that increases one CFMM's liquidity
  must decrease another's.
  
  Consider a rebalancing that carries each
  $(x_{i,1}, x_{i,2})$ to $(x_{i,1}', x_{i,2}')$,
  where $F_i(x_{i,1}', x_{i,2}') > F_i(x_{i,1}, x_{i,2})$.
  By \Cref{increase}, $\cA_i$'s value increases: 
  $\Phi_V(x_{i,1}', x_{i,2}') > \Phi_V(x_{i,1}, x_{i,2})$.
  
  Because rebalancing neither creates nor destroys tokens,
  the sum of the CFMMs' values is invariant:
  \begin{equation*}
    \label{sum-invar}
    \sum_{i=1}^n \Phi_V(x_{i,1}, x_{i,2})
    =
    \sum_{i=1}^n \Phi_V(x_{i,1}', x_{i,2}').
  \end{equation*}
  Therefore there exists some $j$ such that
\begin{equation*}
    \Phi_V(x_{j,1}',x_{j,2}') < \Phi_V(x_{j,1},x_{j,2}).
\end{equation*}
  By \Cref{decrease}, $F_j(x_{j,1}', x_{j,2}') < F_j(x_{j,1}, x_{j,2})$,
  implying that the rebalancing strictly lowered at least one CFMM's liquidity.
\end{proof}
\Cref{arbitrage-prone,arbitrage-free} immediately imply the main result:
\begin{theorem}
\label{main}
A configuration is arbitrage free if and only if it is Pareto efficient.
\end{theorem}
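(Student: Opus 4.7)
The plan is to observe that \Cref{main} is simply the biconditional packaging of the two directional results just established, so no new argument is actually needed; what remains is to state clearly which theorem supplies each direction.

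First I would dispatch the ``only if'' direction: every arbitrage-free configuration is Pareto-efficient. This is exactly the content of \Cref{arbitrage-free}, so I would cite it without further work.

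Next I would handle the ``if'' direction by contrapositive. \Cref{arbitrage-prone} shows that any arbitrage-prone configuration fails to be Pareto-efficient, and since ``arbitrage-prone'' is by definition the logical negation of ``arbitrage-free'', the contrapositive reads: Pareto-efficient implies arbitrage-free. Combining the two implications yields the claimed equivalence.

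There is no genuine obstacle in the proof of \Cref{main} itself; the substantive work was already absorbed into the two preceding theorems. The non-trivial step on one side was the explicit construction in \Cref{arbitrage-prone} of a direct-transfer rebalancing that \emph{mimics} the arbitrageur's profitable trade cycle and then routes the surplus tokens back into a pool to strictly raise its liquidity. The non-trivial step on the other side was the valuation/tangent-line machinery of \Cref{Li}, \Cref{above}, \Cref{increase} and \Cref{decrease}, together with the token-conservation identity $\sum_i V(x_{i,1},x_{i,2})=\sum_i V(x_{i,1}',x_{i,2}')$, which forces any liquidity increase at one CFMM to be compensated by a liquidity decrease at another. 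Given both of these in hand, \Cref{main} is a one-line assembly and I would present it as such.
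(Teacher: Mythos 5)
Your proposal matches the paper exactly: the paper also proves \Cref{main} as an immediate corollary of \Cref{arbitrage-prone} (giving, by contraposition, that Pareto-efficient implies arbitrage-free) and \Cref{arbitrage-free} (giving the converse), with no additional argument. Your summary of where the substantive work lives is accurate, so there is nothing to correct.
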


\section{Rebalancing as Convex Optimization}
\label{convex}
\begin{figure}[tbh]
\centering
\begin{minipage}{0.95\textwidth}
\hrule
\vspace{0.5em}
\small
\begin{tabular}{@{}ll@{}}
\textbf{Parameters:} & $n \in \mathbb{N}$ (Number of CFMMs) \\
& $x_{i,j} \quad i \in \{1,\ldots,n\}, j \in \{1, 2\}$ (Initial pool sizes) \\
& $\mathcal{T} = \{ (i,j,k,\ell) \mid \text{transfer allowed between pools } x_{i,j}, x_{k,\ell} \}$ \\[0.5em]
\textbf{Decision variables:} & $x_{i,j}' \quad i \in \{1,\ldots,n\}, j\in \{1, 2\}$ (Final pool sizes) \\
& $\delta_{i,j,k,\ell} \quad \text{Amount (+/-) transferred along } (i,j,k,\ell) \in \mathcal{T}$ \\[0.5em]
\textbf{Objective:} & $\max \quad L(x_{1,1}',x_{1,2}', \ldots, x_{n,1}',x_{n,2}') := \sum_{i=1}^n \log F_i(x_{i,1}',x_{i,2}')$ \\[0.5em]
\textbf{Subject to:} & $x_{i,j}' \geq 0 \quad \text{for } i \in \{1,\ldots,n\}, j \in \{1,2\}$ \quad (non-negativity) \\[0.3em]
& $\log F_i(x_{i,1},x_{i,2}) \leq \log F_i(x_{i,1}',x_{i,2}'), \quad i=1,\ldots,n$ \quad (liquidity non-reduction) \\[0.3em]
& $x_{i,j}' = x_{i,j} + \sum_{\substack{k,\ell:\\(k,\ell,i,j)\in \mathcal{T}}}\delta_{k,\ell,i,j} - \sum_{\substack{k,\ell:\\(i,j,k,\ell)\in \mathcal{T}}}\delta_{i,j,k,\ell}$ \quad (conservation)
\end{tabular}
\vspace{0.5em}
\hrule
\vspace{0.5em}
\caption{Liquidity Rebalancing Optimization}
\label{program}
\end{minipage}
\end{figure}

In this section,
we show how to find a unique rebalancing
to protect a configuration from arbitrage.
We require the solution to be unique because mutually mistrustful LPs 
must independently verify the result.
Any ambiguity would lead to missed rebalancing opportunities.

Specifically, we show that
finding an arbitrage-free rebalancing can be cast as a convex optimization problem.
Moreover, we can exploit \Cref{main} to simplify the optimization program's
constraints by explicitly seeking a
Pareto-efficient final configuration instead of 
a (possibly more complicated) arbitrage-free final configuration.

\subsection{Parameters}
Rebalancings can transfer tokens only between pools of the same type. 
We capture which transfers are permissible through the index set:
\begin{equation*}
\mathcal{T} := \{ (i,j,k,\ell) \mid \text{transfers allowed between pools } x_{i,j}, x_{k,\ell} \}.
\end{equation*}
The other parameters are the starting pool sizes:
\begin{equation*}
(x_{1,1},x_{1,2}), \ldots, (x_{n,1},x_{n,2}).
\end{equation*}

\subsection{Decision Variables}
There are transfer amounts
\begin{equation*}
\delta_{i,j,k,\ell}, \qquad (i,j,k,\ell)\in\mathcal{T}.
\end{equation*}
If $\delta_{i,j,k,\ell}>0$, then
$|\delta_{i,j,k,\ell}|$ tokens are transferred from the pool
corresponding to $x_{i,j}$ to the pool corresponding to $x_{k,\ell}$.
If $\delta_{i,j,k,\ell}<0$, the transfer is in the opposite direction.
If $\delta_{i,j,k,\ell}=0$, no transfer occurs.

The remaining decision variables are the new pool sizes:
\begin{equation*}
(x_{1,1}',x_{1,2}'), \ldots, (x_{n,1}',x_{n,2}').
\end{equation*}

\subsection{Constraints}
The \emph{non-negativity} constraints require new pool sizes to be non-negative:
\begin{equation*}
x_{i,j}' \geq 0 \quad \text{for } i \in \{1,\ldots,n\}, j \in \{1,2\}.
\end{equation*}
This system of linear inequalities defines a convex set.

The \emph{conservation} constraints define the relationship
between the pool sizes and transfers.
For $i \in \{1,\ldots,n\}, j \in \{1,2\}$,
\begin{equation*}
x_{i,j}' = x_{i,j} + \sum_{\substack{k,\ell:\\(k,\ell,i,j)\in \mathcal{T}}}\delta_{k,\ell,i,j} - \sum_{\substack{k,\ell:\\(i,j,k,\ell)\in \mathcal{T}}}\delta_{i,j,k,\ell}.
\end{equation*}
This system of linear equalities defines a convex set.

The \emph{liquidity non-reduction} constraints require that
rebalancing not decrease any liquidity levels:
\begin{equation*}
F_i(x_{i,1},x_{i,2}) \leq F_i(x_{i,1}',x_{i,2}'), \quad i \in \{1,\ldots,n\}.
\end{equation*}
This constraint, however, is generally not convex,
but it can be made convex through a log transform:
\begin{equation*}
\log F_i(x_{i,1},x_{i,2}) \leq \log F_i(x_{i,1}',x_{i,2}'), \quad i \in \{1,\ldots,n\}.
\end{equation*}
Since the left-hand side is constant and the right-hand side is concave, 
the feasible region defined by this inequality is convex.

\subsection{Objective Function}
The objective function can express goals that span individual CFMMs,
thereby promoting social welfare.

\begin{theorem}
  Let $L: \mathbb{R}^n \to \mathbb{R}$ be strictly increasing in each argument.
  Any configuration that maximizes $L(F_1(x_{1,1}',x_{1,2}'), \ldots, F_n(x_{n,1}',x_{n,2}'))$
  over feasible rebalancings is arbitrage-free.
\end{theorem}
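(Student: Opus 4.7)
The plan is to argue by contradiction and reduce everything to \Cref{arbitrage-prone}. Suppose $c^\ast=(x^\ast_{i,1},x^\ast_{i,2})_{i=1}^n$ is a feasible configuration of the program in \Cref{program} that maximizes $L$, yet is arbitrage-prone. I want to produce a strictly better feasible point, contradicting maximality.

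First, I would invoke \Cref{arbitrage-prone}: since $c^\ast$ is arbitrage-prone, it is not Pareto-efficient, so there exists a rebalancing carrying $c^\ast$ to some $c^{\ast\ast}=(x^{\ast\ast}_{i,1},x^{\ast\ast}_{i,2})_{i=1}^n$ with
\begin{equation*}
F_i(x^\ast_{i,1},x^\ast_{i,2}) \;\leq\; F_i(x^{\ast\ast}_{i,1},x^{\ast\ast}_{i,2})
\quad\text{for every } i,
\end{equation*}
and strict inequality for at least one index.

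Next I would verify that $c^{\ast\ast}$ is itself feasible in the optimization program whose starting configuration is the original $c=(x_{i,1},x_{i,2})_{i=1}^n$. Concatenating the rebalancing that witnesses feasibility of $c^\ast$ with the new rebalancing from $c^\ast$ to $c^{\ast\ast}$ yields a combined set of transfers $\delta + \delta'$ that obeys the conservation constraints (sums of linear transfers). Non-negativity of the pool sizes holds at $c^{\ast\ast}$ by feasibility of the second rebalancing. Liquidity non-reduction relative to $c$ follows by transitivity: $F_i(c) \leq F_i(c^\ast) \leq F_i(c^{\ast\ast})$ for every $i$.

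Finally, since $L$ is increasing in each argument and the vector $(F_1(c^{\ast\ast}),\ldots,F_n(c^{\ast\ast}))$ dominates $(F_1(c^\ast),\ldots,F_n(c^\ast))$ coordinatewise with at least one strict inequality, a one-coordinate-at-a-time comparison gives $L(c^{\ast\ast}) > L(c^\ast)$, contradicting the assumed maximality. The only nontrivial step is the appeal to \Cref{arbitrage-prone}; the rest is routine bookkeeping about feasibility and monotonicity. I do not expect any real obstacle, since the theorem is essentially the contrapositive of \Cref{arbitrage-prone} packaged with the observation that the log-sum-style objective inherits strict monotonicity from $\log$.
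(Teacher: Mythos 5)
Your proposal is correct and follows essentially the same route as the paper: assume a maximizer is arbitrage-prone, conclude it is not Pareto-efficient (the paper cites \cref{main}, you cite \Cref{arbitrage-prone} directly, which is the relevant direction), obtain a liquidity-improving rebalancing, and contradict maximality via monotonicity of $L$. Your extra bookkeeping about composing rebalancings to preserve feasibility of the optimization program is harmless additional care that the paper omits.
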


\begin{proof}
  Suppose $c$ maximizes
  $L(F_1(x_{1,1},x_{1,2}), \ldots, F_n(x_{n,1},x_{n,2}))$ but is not arbitrage-free.
  By \Cref{main}, $c$ is not Pareto efficient,
  so there is a rebalancing that takes each $(x_{i,1},x_{i,2})$ in $c$ to $(x_{i,1}',x_{i,2}')$,
  where each $F_i(x_{i,1},x_{i,2}) \leq F_i(x_{i,1}',x_{i,2}')$,
  and at least one inequality is strict.
  It follows that
  $L(F_1(x_{1,1},x_{1,2}), \ldots, F_n(x_{n,1},x_{n,2})) < L(F_1(x_{1,1}',x_{1,2}'), \ldots, F_n(x_{n,1}',x_{n,2}'))$,
  contradicting the hypothesis.
\end{proof}

\begin{corollary}
Any configuration that maximizes a weighted sum or weighted product of liquidities
\begin{equation*}
\max \sum_{i=1}^n w_i F_i(x_{i,1}',x_{i,2}') \quad \text{or} \quad \max \prod_{i=1}^n F_i(x_{i,1}',x_{i,2}')^{w_i}
\end{equation*}
for $w_i > 0$ is arbitrage-free.
\end{corollary}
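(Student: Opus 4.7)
The plan is to apply the preceding theorem twice, once per objective, by identifying the weighted sum and the weighted product as instances of a coordinate-wise strictly increasing function $L(y_1,\ldots,y_n)$ composed with the liquidities $F_i$. Once strict monotonicity in each argument is verified, arbitrage-freeness at any maximizer follows directly from the preceding theorem; no new machinery is needed.

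For the weighted sum, I would take $L(y_1,\ldots,y_n) := \sum_{i=1}^n w_i y_i$. The partial derivatives $\partial L/\partial y_i = w_i > 0$ show that $L$ is strictly increasing in each argument, so the preceding theorem applies and any maximizer is arbitrage-free.

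For the weighted product, I would take $L(y_1,\ldots,y_n) := \prod_{i=1}^n y_i^{w_i}$. First I would note that the liquidities $F_i(x_{i,1},x_{i,2})$ are strictly positive: pool sizes are positive and each $F_i$ is strictly increasing and positive-valued on the positive orthant, so $L$ is well defined and positive. Because $\log$ is strictly monotone on $(0,\infty)$, maximizing $L$ is equivalent to maximizing $\log L = \sum_{i=1}^n w_i \log y_i$, whose partial derivatives $w_i/y_i$ are strictly positive. Applying the preceding theorem to this log-transformed objective (or, equivalently, observing that $L$ itself has partial derivative $w_i\, y_i^{w_i-1}\prod_{j\neq i} y_j^{w_j} > 0$) yields the conclusion.

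There is no genuine technical obstacle; the corollary is an unpacking of the preceding theorem for two concrete objective families. The only step worth flagging is the positivity of liquidities required to make the log-transform trick meaningful in the product case, which is already implicit in the model's positivity assumptions on pools and in the liquidity non-reduction constraint of the optimization program in \Cref{program}.
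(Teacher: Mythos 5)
Your proposal is correct and matches the intended argument: the paper states this as an immediate corollary of the preceding theorem, with the (implicit) proof being exactly your observation that both objectives are coordinate-wise increasing functions of the liquidities. Your extra care about positivity of the liquidities in the product case is a reasonable and harmless addition, not a departure from the paper's route.
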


For the objective function, we choose to maximize the product of the CFMMs' liquidities.
Maximizing this objective is not directly a convex optimization problem,
but it can be made convex through a log transform:
\begin{equation*}
\max \quad L(x_{1,1}',x_{1,2}', \ldots, x_{n,1}',x_{n,2}') := \sum_{i=1}^n \log F_i(x_{i,1}',x_{i,2}').
\end{equation*}
Because each $\log F_i(\cdot,\cdot)$ is concave on its domain by hypothesis,
$L(\cdot,\cdot)$ is concave as the sum of concave functions.
\Cref{program} shows the complete convex optimization problem.

\begin{theorem}
  The optimization program in \Cref{program} is a convex optimization problem.
\end{theorem}

\begin{proof}
The claim follows because we have shown that 
(1) the objective function to be maximized is concave, and
(2) the feasible region defined by the constraints is convex.
\end{proof}

\begin{theorem}
  The optimization program in \Cref{program} has a unique solution.
\end{theorem}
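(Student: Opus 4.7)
The plan is to deduce uniqueness from strict concavity of the objective on a nonempty, compact, convex feasible region. The preceding theorem already supplied concavity of $L$ and convexity of the feasible set; the extra ingredient needed here is \emph{strict} concavity in the pool-size variables, which pins down at most one maximizer.

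First I would verify that the feasible region is nonempty, closed, and bounded. Nonemptiness is immediate from setting every $\delta_{i,j,k,\ell} = 0$ and $x_{i,j}' = x_{i,j}$, which satisfies conservation, non-negativity, and liquidity non-reduction at equality. Closedness is inherited from continuity of the constraint functions. For boundedness, I would sum the conservation equations across all pools holding a given token type: rebalancing conserves the total supply of each type, so each nonnegative $x_{i,j}'$ is bounded above by the corresponding conserved total. Hence the feasible region is compact and the continuous objective attains its maximum. Next I would argue strict concavity of $L$ in the pool-size vector. Each summand $\log F_i(x_{i,1}',x_{i,2}')$ is strictly concave in its two arguments under the standing (strict) log-concavity hypothesis; for the CPMM $F_i = x_{i,1}'\cdot x_{i,2}'$ the Hessian of $\log F_i$ is diagonal with entries $-1/(x_{i,1}')^2$ and $-1/(x_{i,2}')^2$, manifestly negative definite on $\Reals^2_{++}$. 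Because distinct summands depend on disjoint coordinate blocks, strict concavity of each block's summand together with concavity of the others yields joint strict concavity of $L$: any two distinct feasible pool vectors differ on some block $i$, and strict concavity of the $i$-th summand along the interpolating segment forces the whole sum to be strictly concave there. A strictly concave function on a convex set has at most one maximizer, so the optimal configuration $(x_{i,j}')$ is unique.

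The main obstacle is the gap between log-concavity (as literally stated in the model) and strict log-concavity (as needed here): a trading function whose $\log F_i$ is affine along some direction would permit a flat ridge of maximizers. I would close this gap either by observing that the standard CFMMs targeted by the paper, notably CPMM, are strictly log-concave on $\Reals^2_{++}$, or by promoting the model assumption to strict log-concavity. A secondary subtlety worth acknowledging is that the transfer variables $\delta_{i,j,k,\ell}$ need not be individually unique, since the conservation equations form a linear system whose null space is spanned by cycles of the EDGE graph; however, the final configuration $(x_{i,j}')$ is the operationally meaningful output, and it is this configuration that LPs will verify against one another.
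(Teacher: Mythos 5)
Your proof is correct, and it takes a substantially more careful route than the paper's own argument, which consists of the single sentence that ``a concave function over a convex feasible set has exactly one maximizer.'' That sentence is false as literally stated (a constant function is concave and has a continuum of maximizers, and a merely concave objective on an unbounded convex set may have no maximizer at all), so the two ingredients you supply --- existence via nonemptiness and compactness of the feasible region in the pool-size variables (using conservation of each token's total supply plus non-negativity), and uniqueness via \emph{strict} concavity of the objective --- are exactly what is needed to repair it. One note on your ``main obstacle'': the gap you worry about between log-concavity and strict log-concavity is already closed by the model, since the paper's definition of log-concave uses a strict inequality ($\log F(t\,x_1+(1-t)\,x_2) > t\log F(x_1)+(1-t)\log F(x_2)$ for distinct points), i.e., the standing hypothesis is what is usually called strict log-concavity; your block-decomposition argument then gives joint strict concavity of $L$ in $(x_{1,1}',\ldots,x_{n,2}')$ as you say. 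Your secondary observation is also a genuine correction to the theorem as stated: the decision variables of \Cref{program} include the transfers $\delta_{i,j,k,\ell}$, and these are not unique whenever the EDGE graph contains a cycle, so only the final configuration $(x_{i,j}')$ --- not the full solution vector --- is unique. Since the paper's motivation for uniqueness is that mutually distrustful LPs must agree on the outcome, restricting the uniqueness claim to the final configuration (or adding a tie-breaking term on the $\delta$'s) is the right fix, and your write-up makes that explicit where the paper does not.
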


\begin{proof}
Because each $\log F_i$ is strictly concave, 
the objective function is strictly concave. 
A strictly concave function has at most one maximizer over a convex feasible set.
\end{proof}

As noted earlier,
if CFMMs have additional convex constraints,
they can be incorporated into the optimization program,
though care must be taken that these additional constraints
do not render the problem infeasible.

\section{Mixed Populations}
\label{mixed}
So far we have considered rebalancing among a population of CFMMs,
all of whom are willing to rebalance with one another to
mitigate a common arbitrage vulnerability.

Some LPs might be unwilling to allow a rebalancing agent
to transfer tokens directly in and out of their CFMMs' pools.
We distinguish between \emph{active} CFMMs who are willing to allow
direct pool-to-pool transfers, and \emph{passive} CFMMs, who are not.
We assume there is at least one active CFMM.
Passive CFMMs cannot be ignored,
because their spot prices might expose the others to arbitrage losses,
but passive CFMMs can interact with a rebalancing agent only through standard trades,
not direct pool-to-pool transfers.

A \emph{price oracle} is a market maker that models an external market 
that trades at a fixed reference price.
For example, consider a centralized exchange with (effectively) inexhaustible liquidity and no slippage
that is willing to sell arbitrary amounts of a token at a constant price.
This section shows how rebalancing can be extended to accommodate 
both passive CFMMs and price oracle market makers.

\subsection{Active vs. Passive CFMMs}
Recall the circular trade scenario where 
$\cA, \cB, \cC$ have the constant-product trading function
$F(x_{i,1}, x_{i,2}) := x_{i,1} \cdot x_{i,2} = 3$,
in a configuration where
$\cA$ has \eur 1 and \usd 3,
$\cB$ has \gbp 1 and \eur 3, and
$\cC$ has \usd 1 and \gbp 3.
Because this configuration is arbitrage-prone,
the LPs for $\cA$ decide to rebalance,
but this time without cooperation from the LPs for $\cB, \cC$.
The rebalancing agent does the following:
\begin{itemize}
\item
  Transfer \usd 0.5 out of $\cA$,
\item
  Trade \usd 0.5 to $\cC$ for \gbp 1.0,
\item
  Trade \gbp 1.0 to $\cB$ for \eur 1.5, and
\item
  Transfer \eur 1.5 into $\cA$.
\end{itemize}
Note that each trade is compatible with that CFMM's trading function.
The resulting configuration is arbitrage-free.
Because $\cB, \cC$ performed only trades,
their liquidity levels remain unchanged,
but the liquidity for $\cA$ has increased from $3$ to $6.25$.
By actively rebalancing, $\cA$ has effectively
executed the threatened arbitrage and allocated the profit to itself.
In general, that profit would be spread across the active CFMMs.

A rebalancing is \emph{restricted} if it does not change passive CFMMs' liquidities.
A restricted rebalancing can be implemented by executing both direct transfers 
between active CFMMs and trades with any CFMMs,
while ensuring the process is self-funding by making sure 
that every token ends up in some CFMM's pool.

As before,
consider CFMMs $\cA_1, \ldots, \cA_n$,
with trading functions $F_1(x,y), \ldots, F_n(x,y)$.
The \emph{active} CFMMs $\cA_1, \ldots, \cA_p$ support arbitrary transfers
in and out of their pools, for $p > 0$,
while the \emph{passive} CFMMs
$\cA_{p+1}, \ldots, \cA_n$ support only standard trades,
that is, transfers that leave their liquidities unchanged.

The network of CFMMs forms a \emph{CFMM graph},
where the CFMMs are the vertices,
and there is an edge linking each pair of CFMMs
$\cA, \cB$ that trade the same token type.
Without loss of generality, we assume this graph is \emph{connected},
meaning that between any pair of vertices $\cA, \cB$,
there is a sequence $\cA_1, \ldots, \cA_k$ called a \emph{path},
where consecutive CFMMs are connected by an edge,
$\cA = \cA_1$, and $\cB = \cA_k$.

\begin{lemma}
  \label{connected}
  Let $T$ be a token type traded by some CFMM $\cA_1$.
  An agent in possession of $\epsilon > 0$ tokens of type $T$
  can increase the liquidity of some active CFMM while leaving
  all other liquidity levels unchanged.
\end{lemma}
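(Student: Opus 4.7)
The plan is to route the $\epsilon$ tokens of type $T$ from $\cA_1$ along a path in the CFMM graph to some active CFMM, using standard trades to change token types as needed at intermediate vertices, and then finishing with a single direct pool-to-pool deposit at the active endpoint. Since every standard trade preserves the liquidity of the CFMM it touches, while the final deposit adds tokens to a pool of a function $F_i$ that is strictly increasing in each argument, this construction increases exactly one active CFMM's liquidity and leaves all others unchanged.

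\textbf{Key steps.} First, I would invoke the hypothesis that at least one active CFMM exists together with the connectedness of the CFMM graph to pick an active CFMM $\cA_a$ (with $1 \leq a \leq p$) and a path $\cA_1 = \cB_0, \cB_1, \ldots, \cB_m = \cA_a$; if $\cA_1$ itself is active, take $m = 0$ and skip straight to the deposit step. Next, I would prove by induction on $\ell \in \{0,\ldots,m\}$ that after processing the first $\ell$ edges, the agent holds a positive amount $\epsilon_\ell$ of some token type traded by $\cB_\ell$, and the liquidities of $\cB_0,\ldots,\cB_\ell$ are unchanged. For the inductive step, let $T_\ell$ denote the shared token type on the edge $\cB_\ell\cB_{\ell+1}$. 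If the agent's current token type equals $T_\ell$, the tokens simply move to $\cB_{\ell+1}$ with no trade. Otherwise, both the held type and $T_\ell$ are among $\cB_\ell$'s two traded types, so the agent can execute a standard trade at $\cB_\ell$ that converts the $\epsilon_\ell$ held tokens into a positive amount of $T_\ell$, leaving $\cB_\ell$'s liquidity unchanged by definition of a trade.

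\textbf{Finishing and the main obstacle.} Once the agent reaches $\cA_a$ holding $\epsilon_m > 0$ units of one of $\cA_a$'s token types, I would perform a direct pool-to-pool transfer into the corresponding pool of $\cA_a$; this is permitted precisely because $\cA_a$ is active, and since $F_a$ is strictly increasing in each argument, $F_a$'s value strictly rises. All other CFMM liquidities are unchanged: passive CFMMs on the path saw only trades, and active CFMMs not on the path saw nothing at all. The main obstacle is checking feasibility of the intermediate trades, namely that a strictly increasing, differentiable, log-concave trading function returns a strictly positive output amount for any strictly positive input on a pool with positive reserves; this ensures the inductive carry $\epsilon_\ell > 0$ does not degenerate. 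Given the regularity assumptions on the $F_i$ imposed in \Cref{model}, this feasibility is routine, so the real content of the lemma lies in organising the walk through the connected graph and isolating the single liquidity-increasing step at the active endpoint.
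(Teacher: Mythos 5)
Your proposal is correct and follows essentially the same route as the paper: walk a path in the connected CFMM graph from $\cA_1$ to an active CFMM, converting token types via liquidity-preserving trades at each intermediate vertex, then deposit the remaining positive balance directly into the active endpoint's pool. The paper phrases this as induction on the length of the shortest path to an active CFMM rather than on position along a chosen path, and it omits the explicit feasibility check for intermediate trades that you flag, but these are presentational differences only.
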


\begin{proof}
  By induction on $\ell$,
  the length of the shortest path $\cA_1, \ldots, \cA_\ell$
  linking $\cA_1$ to an active CFMM $\cA_\ell$ in the CFMM graph.

  For the base case, $\ell = 1$, so $\cA_1$ is active.
  The agent simply transfers the $\epsilon$ tokens into $\cA_1$'s pool for $T$,
  increasing $\cA_1$'s liquidity while leaving the others unchanged.

  For the induction step, assume the result for paths shorter than $\ell > 1$.
  The agent first trades at $\cA_1$ the $\epsilon$ tokens of type $T$
  for $\epsilon' > 0$ tokens of type $T'$,
  leaving $\cA_1$'s liquidity unchanged.
  The agent now holds $\epsilon' > 0$ tokens of type $T'$ traded by $\cA_2$,
  a passive CFMM linked to an active CFMM by a path shorter than $\ell$.
  The claim follows from the induction hypothesis.
\end{proof}

\begin{theorem}
  \label{arbitrage-prone-restricted}
  Arbitrage-prone configurations are not Pareto efficient under restricted rebalancing.
\end{theorem}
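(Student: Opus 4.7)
The plan is to mirror the proof of \Cref{arbitrage-prone}, replacing the final step that deposits profits directly into a pool with an appeal to \Cref{connected}, which is exactly the tool needed to route profit tokens to active CFMMs through trades rather than direct transfers. Because a restricted rebalancing admits both direct transfers between active CFMMs and trades with any CFMM, most of the original argument transfers verbatim; the only part that needs care is how profits are absorbed into pools.

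Given an arbitrage-prone configuration, the arbitrage guarantees a sequence of trades that carries an agent's basket from $(b_1,\ldots,b_m)$ to $(b_1',\ldots,b_m')$ with $b_j' \geq b_j$ for every $j$ and strict inequality for at least one $j$. I would construct the restricted rebalancing in two phases, following the borrow/execute/repay template of \Cref{arbitrage-prone}. First, borrow the basket $(b_1,\ldots,b_m)$, execute the arbitrage as a sequence of trades, and repay the loan: each step is a genuine trade, admissible with either an active or a passive CFMM, and by definition of a trade every CFMM's liquidity is preserved throughout this phase. The agent is left holding the profit basket $(b_1'-b_1,\ldots,b_m'-b_m) \geq 0$ with at least one strictly positive component.

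Second, for each token type $T_k$ with $b_k' - b_k > 0$, I invoke \Cref{connected} with the $b_k'-b_k$ profit tokens: this produces a sequence of trades (and possibly a direct transfer into the terminal active CFMM) that strictly increases the liquidity of some active CFMM while leaving every other CFMM's liquidity unchanged. Composing phase one with these token-by-token absorption steps yields a single restricted rebalancing under which no CFMM's liquidity decreases and at least one strictly increases, contradicting the hypothesis that the starting configuration is Pareto-efficient under restricted rebalancing.

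The main obstacle, as I see it, is the justification of the absorption step. Unlike the unrestricted setting, where a profit token of type $T_k$ can simply be dropped into any pool that trades $T_k$, here we must guarantee that the profit can actually reach an active CFMM using only the operations available in a restricted rebalancing. This is precisely what \Cref{connected} supplies, by induction on the shortest path in the (connected) CFMM graph from a CFMM trading $T_k$ to an active CFMM; consequently the real work of the theorem has already been done by \Cref{connected}, and the present proof is essentially its composition with the borrow/execute/repay skeleton of \Cref{arbitrage-prone}.
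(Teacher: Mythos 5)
Your proof is correct and follows essentially the same route as the paper's: borrow the basket, replay the arbitrageur's trades, repay the loan, and then absorb each positive profit component into an active CFMM via \Cref{connected}. The paper's own proof is just a terser version of your two-phase argument, with the same reliance on \Cref{connected} to do the real work.
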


\begin{proof}
  Let $c$ be an arbitrage-prone configuration.

  As in the proof of \Cref{arbitrage-prone},
  an arbitrageur starting with a basket of tokens $(b_1, \ldots, b_m)$
  can execute a sequence of trades starting from $c$,
  ending with the basket $(b_1', \ldots, b_m')$,
  where each $b_j' \geq b_j$, for $j \in \{1, \ldots, m\}$,
  and at least one inequality is strict.

  For each token type $T_k$ for which $b_k' - b_k > 0$,
  \Cref{connected} implies that the agent can transfer those surplus tokens
  to some active CFMM trading $T_k$, 
  increasing that CFMM's liquidity while
  leaving all other liquidity levels unchanged.
  Repeating this procedure for every profitable token type yields a
  rebalancing in which every CFMM's liquidity is at least its original
  value and at least one CFMM's liquidity is strictly greater,
  contradicting the hypothesis that $c$ is Pareto efficient.
\end{proof}

\begin{theorem}
\label{arbitrage-free-restricted}
  Any arbitrage-free configuration is Pareto efficient under restricted rebalancing.
\end{theorem}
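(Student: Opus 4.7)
The plan is to mimic the proof of \Cref{arbitrage-free}, reusing the valuation machinery (\Cref{val}, \Cref{above}, \Cref{increase}, \Cref{decrease}) and the invariance of the sum of valuations under self-funding transfers, with one additional observation that handles the passive CFMMs. Since $c$ is arbitrage-free by hypothesis, pick a token valuation $[V_1,\ldots,V_m]$ consistent with all CFMMs' spot prices at $c$, and define $V(x_{i,1},x_{i,2})$ and the tangent lines $L_i$ as in \Cref{Li}. A restricted rebalancing is self-funding: every token that leaves a pool ends up in some other pool, whether through a direct transfer (between active CFMMs) or through a trade (with any CFMM). Hence the global token counts, and therefore $\sum_i V(x_{i,1},x_{i,2})$, are invariant.

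Assume for contradiction that a restricted rebalancing carries $c$ to $c'$ with $F_i(x_{i,1}',x_{i,2}') \geq F_i(x_{i,1},x_{i,2})$ for every $i$ and strict inequality for some $i_0$. For each active CFMM $\cA_i$, the non-reduction of $F_i$ combined with the contrapositive of \Cref{decrease} yields $V(x_{i,1}',x_{i,2}') \geq V(x_{i,1},x_{i,2})$, and \Cref{increase} applied at $i_0$ gives a strict increase there. The new ingredient is the passive CFMMs: by definition of a restricted rebalancing, for every passive $\cA_i$ we have $F_i(x_{i,1}',x_{i,2}') = F_i(x_{i,1},x_{i,2})$, i.e.\ $(x_{i,1}',x_{i,2}')$ lies on the same level set as $(x_{i,1},x_{i,2})$. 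By \Cref{above} this level set lies in the closed half-plane above $L_i$, so $V(x_{i,1}',x_{i,2}') \geq V(x_{i,1},x_{i,2})$ for passive CFMMs as well. Summing over all $i$ then gives $\sum_i V(x_{i,1}',x_{i,2}') > \sum_i V(x_{i,1},x_{i,2})$, contradicting the invariance of the total valuation.

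The main obstacle is the passive-CFMM step: \Cref{decrease} was stated for rebalancings that alter liquidity, but here the passive CFMMs' liquidities are pinned down exactly by the definition of restricted rebalancing, so one has to argue separately that a motion along a level set cannot decrease the valuation $V(\cdot,\cdot)$ under the initial pricing. This is precisely the content of \Cref{above}: the tangent $L_i$ is a supporting line of the convex upper level set $\{F_i \geq F_i(x_{i,1},x_{i,2})\}$, so its boundary level set sits in the half-plane $\{V \geq V(x_{i,1},x_{i,2})\}$. With that observation in hand the rest of the argument is a line-for-line transcription of \Cref{arbitrage-free}, and no further strict-convexity hypothesis beyond what is already assumed on $F_i$ is needed.
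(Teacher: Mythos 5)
Your proof is correct, but it takes a substantially longer route than the paper, which disposes of this theorem in one line: it ``follows directly from \Cref{arbitrage-free}.'' The observation underlying that one-liner is that the net effect of any restricted rebalancing---direct transfers plus self-funding trades with passive CFMMs---is itself a rebalancing in the formal sense (a token-conserving reallocation among pools), so a restricted Pareto improvement would in particular be an unrestricted Pareto improvement, which \Cref{arbitrage-free} already rules out for arbitrage-free configurations. Your version instead re-runs the entire valuation argument from the proof of \Cref{arbitrage-free}, which is fine and self-contained, but the step you flag as the ``new ingredient'' for passive CFMMs is not actually new: a passive CFMM's liquidity being \emph{unchanged} is just a special case of its liquidity being \emph{not decreased}, so the same appeal to the contrapositive of \Cref{decrease} (equivalently, to \Cref{above}) that you make for the active CFMMs covers the passive ones verbatim, and no case split is needed. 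What your longer argument buys is an explicit record of where each hypothesis is used; what the paper's approach buys is the recognition that restricting the class of admissible rebalancings can only make Pareto efficiency easier to achieve, so no new argument is required at all.
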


\begin{proof}
  The claim follows immediately from \Cref{arbitrage-free}.
\end{proof}

The \emph{restricted rebalancing convex optimization program}
requires only one change to the optimization program of \Cref{program}:
instead of maximizing the product of \emph{all} CFMM liquidities,
the restricted program maximizes the product of
the active CFMM liquidities \emph{only} (replacing $n$ with $p$):
\begin{equation}
\label{restricted-objective}
\max \quad L(x_{1,1}', x_{1,2}', \ldots, x_{p,1}', x_{p,2}') 
:= \sum_{i=1}^p \log F_i(x_{i,1}', x_{i,2}').
\end{equation}
The passive CFMMs' liquidities are bound only by the liquidity non-reduction constraint:
\begin{equation*}
F_i(x_{i,1}, x_{i,2}) \leq F_i(x_{i,1}', x_{i,2}'), \quad i \in \{p+1, \ldots, n\}.
\end{equation*}  
Let $c' = (x_{1,1}', x_{1,2}'), \ldots, (x_{n,1}', x_{n,2}')$
be the optimal configuration produced by solving the restricted rebalancing program.
Note that because the program's objective function is strictly concave,
and the feasible region is convex, the solution $c'$ is unique.

We claim that in $c'$, the passive CFMMs' non-reduction constraints hold with equality.

\begin{lemma}
\label{strict}
\begin{equation*}
F_i(x_{i,1}, x_{i,2}) = F_i(x_{i,1}', x_{i,2}'), \quad i \in \{p+1, \ldots, n\}.
\end{equation*}
\end{lemma}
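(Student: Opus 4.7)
The plan is to argue by contradiction. Suppose that for some passive CFMM $\cA_{i^*}$, with $i^*\in\set{p+1,\ldots,n}$, the constraint is slack: $F_{i^*}(x_{i^*,1}',x_{i^*,2}') > F_{i^*}(x_{i^*,1},x_{i^*,2})$. I will construct a feasible perturbation $c''$ of $c'$ whose objective value (the sum $\sum_{i=1}^p \log F_i$) is strictly larger, contradicting the optimality of $c'$.

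The first step is to extract a small amount of liquidity from $\cA_{i^*}$ without violating its non-reduction constraint. Because $F_{i^*}$ is continuous and strictly increasing in each argument, there exists $\epsilon_0>0$ such that for every $\epsilon\in(0,\epsilon_0]$, the perturbed state $(x_{i^*,1}'-\epsilon,\;x_{i^*,2}')$ still lies in the positive orthant and has $F_{i^*}$-value strictly above the floor $F_{i^*}(x_{i^*,1},x_{i^*,2})$. So $\epsilon$ tokens of type $T_{i^*,1}$ can be siphoned out of $\cA_{i^*}$'s first pool, and by conservation they must be absorbed somewhere else.

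The second step is to route those freed tokens into an active CFMM, mirroring the inductive construction in \Cref{connected}. Since the CFMM graph is connected and at least one CFMM is active, there is a path from $\cA_{i^*}$ to some active $\cA_{j}$. I would perform a standard trade at each intermediate passive CFMM on that path, which by definition leaves its liquidity (and thus its non-reduction constraint) untouched, and then deposit the arriving tokens directly into a pool of $\cA_{j}$, strictly increasing $F_{j}$.

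The main obstacle is the bookkeeping: encoding this routing as a simultaneous adjustment of the program's $\delta_{i,j,k,\ell}$ and $x_{i,j}'$ variables and verifying that every constraint still holds. Non-reduction at $\cA_{i^*}$ holds by the choice of $\epsilon$; non-reduction at each intermediate passive CFMM holds because trades preserve liquidity exactly; conservation holds by construction of the chained transfers; non-negativity can be enforced by shrinking $\epsilon$ further if needed. The net effect on the objective is to strictly increase $\log F_j$ while leaving every other active-CFMM liquidity unchanged, contradicting the optimality of $c'$ and completing the proof.
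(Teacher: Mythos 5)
Your proposal is correct and follows essentially the same route as the paper's own proof: contradiction via a slack constraint, withdrawing $\epsilon$ tokens from the passive CFMM's pool while preserving its non-reduction constraint, and then invoking \Cref{connected} to route those tokens to an active CFMM and strictly increase the objective. The paper's version is simply terser, leaving the routing and feasibility bookkeeping entirely to \Cref{connected} rather than spelling it out.
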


\begin{proof}
Assume by way of contradiction that in $c'$,
$F_i(x_{i,1}', x_{i,2}') > F_i(x_{i,1}, x_{i,2})$ for some passive $\cA_i$.
Because $F_i(\cdot, \cdot)$ is strictly increasing in both arguments,
we can transfer $\epsilon > 0$ tokens out of pool $x_{i,1}'$
without violating the program's constraints:
$F_i(x_{i,1}' - \epsilon, x_{i,2}') \geq F_i(x_{i,1}, x_{i,2})$.
By \Cref{connected},
we can use these tokens to increase the liquidity of some active CFMM
while leaving all other liquidity levels unchanged.
This action increases the objective function's value,
contradicting the hypothesis that $c'$ is optimal.
\end{proof}

\begin{lemma}
\label{arb-free}
Configuration $c'$ is arbitrage-free.
\end{lemma}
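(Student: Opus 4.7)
My plan is to argue by contradiction. I assume that $c'$ is arbitrage-prone, and I aim to construct another feasible solution of the restricted rebalancing optimization program whose objective value strictly exceeds that of $c'$, contradicting the uniqueness of the optimum.

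First, I would apply \Cref{arbitrage-prone-restricted} to the configuration $c'$ itself. Because $c'$ is assumed arbitrage-prone, the theorem produces a restricted rebalancing starting from $c'$ that does not decrease any CFMM's liquidity and strictly increases at least one. Examining the constructive proof (which funnels any arbitrage surplus into an active CFMM through \Cref{connected}), the strict increase occurs at an active CFMM; meanwhile, because the rebalancing is restricted, all passive liquidities are unchanged.

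Let $c''$ denote the configuration obtained by composing the rebalancing $c \to c'$ (coming from the optimal solution) with the second-stage rebalancing $c' \to c''$. I would then verify that $c''$ is feasible for the restricted program with initial configuration $c$. Non-negativity is inherited because each stage is a feasible rebalancing. Conservation holds because the composed transfers along each edge in \textit{EDGE} add in the obvious linear way. The liquidity non-reduction constraint holds since active CFMMs satisfy $F_i(x''_{i,1},x''_{i,2}) \geq F_i(x'_{i,1},x'_{i,2}) \geq F_i(x_{i,1},x_{i,2})$, and passive CFMMs satisfy $F_i(x''_{i,1},x''_{i,2}) = F_i(x'_{i,1},x'_{i,2}) = F_i(x_{i,1},x_{i,2})$, the last equality being \Cref{strict}. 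Because at least one active CFMM's liquidity strictly increases between $c'$ and $c''$, the objective $\sum_{i=1}^p \log F_i(x'_{i,1},x'_{i,2})$ takes a strictly larger value at $c''$ than at $c'$, contradicting optimality.

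The step I expect to be most delicate is expressing the second-stage restricted rebalancing purely in terms of the program's direct-transfer variables $\delta_{i,j,k,\ell}$. A restricted rebalancing is allowed to use trades with passive CFMMs, while the optimization program has conservation stated only in terms of signed transfers along \textit{EDGE}. I would therefore need to argue that any trade with a passive CFMM can be encoded as a pair of edge transfers, one into and one out of that CFMM's two pools, whose net effect preserves its liquidity. Once this bookkeeping is handled, the composite rebalancing $c \to c''$ is a legitimate program point and the contradiction above completes the proof.
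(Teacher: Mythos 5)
Your proof is correct, but it reaches the contradiction by a different route than the paper. The paper first invokes \Cref{main} to reduce the claim to showing $c'$ is Pareto-efficient, then supposes an arbitrary liquidity-improving rebalancing $c' \to c^\dagger$ and observes that \emph{if} the strict gain falls on an active CFMM, the objective of the program in \Cref{program} would increase, contradicting optimality of $c'$. You instead assume $c'$ is arbitrage-prone and deploy the constructive content of \Cref{arbitrage-prone-restricted} (arbitrage trades followed by funneling the surplus through \Cref{connected}), which \emph{guarantees} the strict gain lands on an active CFMM while passive liquidities are untouched; you then compose the two stages $c \to c' \to c''$ and check feasibility of the composite explicitly. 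Your version buys two things the paper's terse argument does not: it closes the case the paper leaves hanging, namely an improving rebalancing whose only strict gain is at a passive CFMM (your construction never produces such a case, whereas the paper's ``if'' silently skips it), and it makes explicit that the composite transfer vector is a legitimate point of the program (conservation composes linearly, non-reduction chains through $F_i(c'') \geq F_i(c') \geq F_i(c)$). The bookkeeping issue you flag --- encoding a trade with a passive CFMM as a pair of signed edge transfers that net to zero on its liquidity --- is real but benign, since the program constrains only final pool sizes and the liquidity non-reduction inequalities, both of which your composite satisfies. The one thing the paper's route buys is brevity: by citing \Cref{main} it never needs to re-derive the existence of an improving rebalancing from arbitrage-proneness.
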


\begin{proof}
By \Cref{main},
it suffices to show that $c'$ is Pareto efficient.
Assume by way of contradiction that there is a rebalancing
taking $c'$ to a configuration
$c^\dagger = (x_{1,1}^\dagger, x_{1,2}^\dagger), \ldots, (x_{n,1}^\dagger, x_{n,2}^\dagger)$
where for $i \in \{1, \ldots, n\}$,
$F_i(x_{i,1}^\dagger, x_{i,2}^\dagger) \geq F_i(x_{i,1}', x_{i,2}')$,
and at least one inequality is strict.

If $c^\dagger$ strictly increases an active CFMM's liquidity,
then $c^\dagger$ has a higher objective function value than $c'$,
while continuing to satisfy the restricted program's other constraints,
contradicting the hypothesis that $c'$ is Pareto efficient.

If, on the other hand, no active CFMM's liquidity increases,
then every strict increase must occur at a passive CFMM.
By \Cref{strict}, however, every passive CFMM already satisfies its
liquidity non-reduction constraint with equality in $c'$.
Any strict increase in a passive CFMM's liquidity would continue to
satisfy the restricted program's constraints,
and by \Cref{connected} the additional liquidity could be transferred to an
active CFMM without decreasing any other liquidity levels,
contradicting the optimality of $c'$.
\end{proof}

In summary, the optimization program of \Cref{program} extends naturally
to encompass passive CFMMs simply by omitting their liquidities from
the objective function.

\subsection{Price Oracles}
A \emph{price oracle} models a centralized exchange that sets the ``market price'' for a token by trading at that price.
Recall the circular trade scenario with CFMMs $\cA, \cB, \cC$.
As before, $\cA$ and $\cB$ have constant-product trading functions
$F_i(x_{i,1}, x_{i,2}) := x_{i,1} \cdot x_{i,2} = 3$.
$\cC$, however, 
is an oracle market maker with unlimited liquidity that trades pounds and dollars at par.
Initially,
$\cA$ has \eur 1 and \usd 3,
$\cB$ has \gbp 1 and \eur 3, and
$\cC$ has inexhaustible amounts of dollars and pounds.
This configuration is arbitrage-prone,
so the LPs for $\cA$ alone decide to rebalance.
The rebalancing agent:
\begin{itemize}
\item
  Transfers $\usd (\sqrt{3}-1) \approx \usd 0.732051$ from $\cA$,
\item
  Trades $\usd (\sqrt{3}-1)$ for $\gbp (\sqrt{3}-1)$ at $\cC$,
\item
  Trades $\gbp (\sqrt{3}-1)$ for $\eur (3-\sqrt{3}) \approx \eur 1.26795$ at $\cB$,
  and
\item
  Transfers $\eur (3-\sqrt{3})$ to $\cA$.
\end{itemize}
In the end,
$\cA$ has pools $(4-\sqrt{3}, 4-\sqrt{3})$
and $\cB$ has $(\sqrt{3}, \sqrt{3})$.
$\cA$'s liquidity has increased from $3$ to $19 - 8\sqrt{3} \approx 5.14359$,
while the others, being passive, remain unchanged.
This final configuration is arbitrage-free, and hence Pareto-efficient.

To create a convex optimization program that encompasses
one or more oracle AMMs,
the key insight is that a price-setting oracle AMM
is simply a passive CFMM with linear trading function
$F_i(x_{i,1}, x_{i,2}) := a x_{i,1} + b x_{i,2}$, where $a, b > 0$,
initialized with sufficient reserves to ensure no pool is exhausted by the rebalancing.
(Alternatively, one could permit oracle AMMs to have negative reserve
balances in the optimization program.)

\section{Remarks}
\label{remarks}
\subsection{Alternatives to CFMMs}
Although we have focused on CFMMs,
the equivalence between arbitrage-free and Pareto-efficient
configurations extends to any financial system prone
to arbitrage where parties have distinct private utility functions.

For example, suppose Alice owns a basket containing one gold token and one silver token.
She subjectively values gold at 1 dollar per token and silver at 2 dollars per token,
and thus values her basket at 3 dollars.
Bob also owns a basket containing one gold token and one silver token.
He subjectively values gold at 2 dollars per token and silver at 1 dollar per token,
and thus also values his basket at 3 dollars.
Both parties will accept any trade that increases the perceived value of their baskets
according to their own valuations.

Suppose an arbitrageur offers Alice $((1+\epsilon)/2)$ silver tokens 
in exchange for her gold token. 
She will accept, because she values the silver she receives at $(1+\epsilon)$ dollars,
while she values the gold she gives up at only 1 dollar.
Her new basket is worth $(3+\epsilon)$ dollars according to her valuation.
Similarly, the arbitrageur can offer Bob $((1+\epsilon)/2)$ gold tokens in exchange for his silver token.
Bob will also accept, and he too will value his new basket at $(3+\epsilon)$ dollars.

Collectively, the arbitrageur receives one gold token and one silver token while paying out $((1+\epsilon)/2)$ of each. 
The arbitrageur therefore earns a profit of $((1-\epsilon)/2)$ gold tokens and $((1-\epsilon)/2)$ silver tokens.

By comparison, a rebalancing might exchange Alice's gold token for Bob's silver token, 
leaving both parties in an arbitrage-free,
Pareto-efficient configuration in which each values their basket at 4 dollars.

\subsection{Alternatives to Rebalancing}
As an alternative to rebalancing,
LPs might instead execute ordinary trades among the CFMMs to mitigate their exposure to arbitrage.
Here we use a simple example to illustrate why LPs might prefer
defensive rebalancing (as introduced here)
to trading.
In general, however,
defensive trading cannot be \emph{self-funding}:
the trading agent must either inject tokens into the system of CFMMs
or withdraw them.

Consider the following example.
Recall the circular trade scenario where $\cA, \cB, \cC$
are in the arbitrage-prone configuration
$\cA: (\eur 1, \usd 3)$,
$\cB: (\gbp 1, \eur 3)$, and
$\cC: (\usd 1, \gbp 3)$.
\Cref{trade-only} shows an optimization program
that seeks an arbitrage-free configuration.
It uses slack variables $\sigma_\usd, \sigma_\eur, \sigma_\gbp$
to represent the amount of each token the agent adds
or removes from $\cA, \cB, \cC$.
The objective function is to minimize the sum of the squares
of the slack variables.
If a self-funding solution were possible,
these variables would all be zero.

This optimization problem is not convex,
but it is small enough to be readily solved by off-the-shelf 
solvers\footnote{For this example, we used Mathematica's \texttt{NMinimize}.}.
In this case,
minimizing the sum of the squares of the slack variables 
requires transferring $\sigma_\usd = \sigma_\eur = \sigma_\gbp = 4 - 2\sqrt{3} \approx 0.535898$ 
units of each currency from the three CFMMs to the agent,
yielding a solution that is both less effective than rebalancing
(because the CFMMs permanently surrender assets)
and more complicated
(because the agent must retain custody of the withdrawn assets).

\begin{figure}[tbh]
\centering
\begin{minipage}{0.95\textwidth}
\hrule
\vspace{0.5em}
\small
\begin{tabular}{@{}ll@{}}
\textbf{Parameters:} & $(x_{1,1}, x_{1,2}), \quad (x_{2,1}, x_{2,2}), \quad (x_{3,1}, x_{3,2})$ \\[0.5em]
\textbf{Decision variables:} & $(x_{1,1}', x_{1,2}'), \quad (x_{2,1}', x_{2,2}'), \quad (x_{3,1}', x_{3,2}')$ \\
& $\sigma_\eur, \quad \sigma_\gbp, \quad \sigma_\usd$ \\[0.5em]
\textbf{Objective:} & $\min \quad \sigma_\eur^2 + \sigma_\gbp^2 + \sigma_\usd^2$ \\[0.5em]
\textbf{Subject to:} & $x_{1,1}' x_{1,2}' = x_{1,1} x_{1,2} = 3 \quad \text{(Invariant for } \cA\text{)}$ \\[0.2em]
& $x_{2,1}' x_{2,2}' = x_{2,1} x_{2,2} = 3 \quad \text{(Invariant for } \cB\text{)}$ \\[0.2em]
& $x_{3,1}' x_{3,2}' = x_{3,1} x_{3,2} = 3 \quad \text{(Invariant for } \cC\text{)}$ \\[0.2em]
& $x_{1,1} + x_{2,2} = x_{1,1}' + x_{2,2}' + \sigma_\eur$ \\[0.2em]
& $x_{2,1} + x_{3,2} = x_{2,1}' + x_{3,2}' + \sigma_\gbp$ \\[0.2em]
& $x_{3,1} + x_{1,2} = x_{3,1}' + x_{1,2}' + \sigma_\usd$ \\[0.2em]
& $x_{1,1}' x_{2,1}' x_{3,1}' = x_{1,2}' x_{2,2}' x_{3,2}' \quad \text{(Arbitrage-free condition)}$ \\[0.2em]
& $x_{i,j}' > 0 \quad \text{for } i \in \{1,2,3\}, j \in \{1,2\}$
\end{tabular}
\vspace{0.5em}
\hrule
\vspace{0.5em}
\caption{Arbitrage protection with trades alone.}
\label{trade-only}
\end{minipage}
\end{figure}

\section{Related Work}
\label{related}
The notion of a CFMM is due to Angeris \etal~\cite{AngerisC2020},
who define the notion of a CFMM, 
analyze how CFMMs can serve as decentralized price oracles,
and formulate the \emph{optimal arbitrage problem},
namely, how to exploit an arbitrage-prone configuration most profitably.
Angeris \etal~\cite{AngerisCEB2022} consider how to use convex
optimization to detect arbitrage-prone configurations,
and how best to exploit them,
a problem they call \emph{optimal routing}.
Our use of optimization complements theirs:
rather than exploiting arbitrage opportunities,
we compute optimal rebalancings that eliminate them.
Angeris \etal~\cite{AngerisEC2020} investigate what happens
when a CFMM displaces an external market as a price setter.

Milionis \etal~\cite{MilionisMRZ2022} propose \emph{loss-versus-rebalancing}
(LVR) as an improvement over divergence loss as a way of measuring
LP costs in configurations with a central reference market.
LVR measures the difference between the value of a hypothetical
investment strategy that responds optimally to market price changes
and the value of an LP's actual investment in a CFMM,
which responds only through arbitrage.
Defensive rebalancing promises to reduce LVR by allowing LPs to
adjust their pool sizes dynamically in response to price changes at other markets.

Danos \etal~\cite{DanosKP2021} formulate various arbitrage problems
in a network of AMMs as optimization programs.
Angeris \etal~\cite{AngerisAECB2021} describe how a variety of problems
associated with CFMMs can be formulated as optimization problems.
Wang \etal~\cite{WangCWZDW21} perform an empirical investigation
of arbitrage transactions in Uniswap V2,
reporting on the frequency of arbitrage opportunities,
how often they are exploited, and their profitability.
Bartoletti \etal~\cite{BartolettiCL2021} describe arbitrage between
AMM pairs as a two-party game.
Dimitri~\cite{Dimitri2024} analyzes how arbitrage can
affect CFMM parameters such as pool sizes, prices, and liquidity.
Milionis \etal~\cite{MilionisMR23a}
study how block time affects arbitrage profits. 

As a way to avoid arbitrage altogether,
Krishnamachari \etal~\cite{KrishnamachariFG2021} describe \emph{dynamic curves},
which allow AMMs to adjust their trading curves to match oracle-reported prices.

In classical economics,
the \emph{First Welfare Theorem} states that, under suitable assumptions, 
every competitive equilibrium is Pareto efficient, 
while the \emph{Second Welfare Theorem} states that, 
under additional convexity assumptions, 
every Pareto-efficient allocation can be supported as a competitive equilibrium 
after an appropriate redistribution of initial endowments~\cite[Chapter~16]{mascolell1995microeconomic}.
\Cref{arbitrage-prone,arbitrage-free} bear a strong resemblance to the classical welfare theorems
and might at first seem to be special cases.
The analogy seems close: CFMMs play the role of consumers, 
tokens the role of commodities, 
liquidity the role of utility, 
rebalancing the role of feasible reallocation, 
and a common token valuation the role of a competitive price vector. 
Nevertheless, a direct reduction to the classical welfare theorems is not immediate. 
In particular, CFMMs do not explicitly optimize utility 
subject to exogenously specified budget constraints, 
and the feasible reallocations are constrained by the network of token pools. 
Thus, establishing \Cref{arbitrage-prone,arbitrage-free} as formal consequences 
of those theorems would require a careful correspondence between the two models.

\section{Conclusion}
\label{conclusion}
\subsection{Transaction Costs and Fees}
In practice, CFMMs charge \emph{trading fees}.
A CFMM $\cA_i$ with trading function $F_i(x_1,x_2)$
retains a fraction $\gamma_i \in (0,1]$ of each incoming trade.
If a trader transfers $\delta_\usd$ dollars to $\cA_i$ and receives $\delta_\eur$ euros in return,
the pool invariants update according to $F_i(x_{i,1} + \gamma_i \delta_\usd, x_{i,2} - \delta_\eur) = k$.
(The un-discounted fraction $(1-\gamma_i)\delta_\usd$ of dollars is typically retained by LPs as fee revenue.)
For example, in Uniswap, $\gamma_i$ depends on the pool's fee tier.

Adding fees to the model requires a few changes to \Cref{program}.
There is a new parameter: each $\cA_i$ has a fee parameter $\gamma_i$.
The conservation constraint is split into two parts:
\begin{align*}
  \text{for } i \le p: \quad 
  & x_{i,j}' = x_{i,j} + \sum_{\substack{k,\ell:\\(k,\ell,i,j)\in \mathcal{T}}}\delta_{k,\ell,i,j} - \sum_{\substack{k,\ell:\\(i,j,k,\ell)\in \mathcal{T}}}\delta_{i,j,k,\ell}, \quad \text{(active conservation)} \\[0.5em]
  \text{for } i > p: \quad 
  & x_{i,j}' = x_{i,j} + \sum_{\substack{k,\ell:\\(k,\ell,i,j)\in \mathcal{T}}}\gamma_i \delta_{k,\ell,i,j} - \sum_{\substack{k,\ell:\\(i,j,k,\ell)\in \mathcal{T}}}\delta_{i,j,k,\ell}, \quad \text{(passive conservation)}
\end{align*}
The conservation equations for active CFMMs are unchanged
(we assume pool-to-pool transfers between active CFMMs do not incur trading fees),
while the conservation constraints for passive CFMMs reflect their respective fees.

Agents also typically incur a small fixed cost with each trade.
Angeris \etal~\cite{AngerisCEB2022} observe that incorporating
such costs in the optimization program transforms a convex program
into a mixed-integer convex program.
Exact solutions for such programs may be computationally expensive,
but efficient approximate solvers exist
(see the cited reference for details).

The optimization program could be extended in a number of ways.
It might impose additional constraints,
such as enforcing upper or lower bounds on pool sizes.
The objective function could be extended to take into account costs
such as inventory risk or gas fees.

\subsection{Practical Issues}
Rebalancing requires active CFMMs to move tokens directly from
one pool to another without changing LPs' ownership shares.
Existing CFMMs (\eg,~\cite{uniswapv3,balancer,Bullish,CoinEx2025,SkateAMM2025})
provide mechanisms
for LPs to add or remove liquidity,
but none currently provide the exact functionality
needed to support rebalancing.

Rebalancing also raises security concerns.
When presented with a rebalancing proposal,
it is trivial for a CFMM to check that its own liquidity does not decrease.
However, verifying that the proposal maximizes the global objective function is expensive in gas costs.
For this reason,
the LPs might instead rely on a trusted third party to manage rebalancing,
in the same way DEXes currently engage services like
Chainlink~\cite{chainlink} to provide oracles,
or LayerZero~\cite{layerzero} or Axelar~\cite{axelar}
to provide cross-chain messaging.
In the context of the Ethereum Virtual Machine,
one could instead install a deterministic solver as a
``precompiled contract''~\cite{precompiled},
thereby providing a low-cost way for a 
CFMM to verify that a proposed rebalancing is fair and optimal.

In practice, successful defensive rebalancing requires executing a rebalancing transaction before competing arbitrage transactions eliminate the opportunity. 
Mechanisms for achieving this goal, including transaction sequencing, private order flow, auctions, or proposer-builder separation, 
are beyond this paper's scope.
One direction for future work is the design of decentralized mechanisms that detect arbitrage opportunities, 
compute Pareto-efficient rebalancings, 
and obtain sufficient execution priority to realize those rebalancings 
in adversarial transaction-ordering environments.

% LIPIcs style uses standard plainurl or alpha bibliography styles

\bibliographystyle{plainurl}
\bibliography{references,zotero}

\section{Appendix}
\label{appendix}

\textbf{\cref{lemma:valuation}}\\
Every arbitrage-free configuration induces a global valuation.

\begin{proof}
Suppose the CFMM network trades token types $(T_1,\dots,T_m)$, and let $p_{jk}>0$ denote the spot price of $T_j$ in units of $T_k$ at any CFMM trading the pair $(T_j,T_k)$. 
We first show that arbitrage freedom implies that the product of spot prices around every cycle is $1$.

Consider a cycle
\begin{equation*}
T_{i_1},T_{i_2},\dots,T_{i_r},T_{i_1}.
\end{equation*}
If
\begin{equation*}
p_{i_1i_2}p_{i_2i_3}\cdots p_{i_ri_1}>1,
\end{equation*}
then, by continuity of the CFMM trading functions,
a sufficiently small quantity of $T_{i_1}$ can be traded around the cycle,
returning a strictly larger quantity of $T_{i_1}$ and thereby yielding an arbitrage profit. 
Hence
\begin{equation*}
p_{i_1i_2}p_{i_2i_3}\cdots p_{i_ri_1}\le 1.
\end{equation*}
Applying the same argument to the cycle traversed in the opposite direction gives the reciprocal inequality, so
\begin{equation*}
p_{i_1i_2}p_{i_2i_3}\cdots p_{i_ri_1}=1.
\end{equation*}

Assume first that the token graph is connected.
Choose $T_1$ as a num\'eraire and set $V_1:=1$.
For any other token $T_j$, choose a path
\begin{equation*}
T_1=T_{i_0},T_{i_1},\dots,T_{i_r}=T_j  
\end{equation*}
and define
\begin{equation*}
V_j = \prod_{\ell=0}^{r-1} p_{i_{\ell+1}i_\ell}.
\end{equation*}
This definition is independent of the chosen path.
Indeed, if two paths from $T_1$ to $T_j$ produced different values,
then traversing one path forward and the other backward would form
a cycle whose product of spot prices was not $1$, a contradiction.

Finally, suppose a CFMM trades $T_j,T_k$.
Take any path from $T_1$ to $T_k$ and append the edge from $T_k$ to $T_j$.
By the definition of $V_j$,
\begin{equation*}
V_j=p_{jk}V_k,  
\end{equation*}
and therefore
\begin{equation*}
p_{jk}=\frac{V_j}{V_k}.  
\end{equation*}
Thus $V=(V_1,\dots,V_m)$ is a consistent valuation.

If the token graph has multiple connected components,
apply the above construction independently to each component,
choosing an arbitrary token in each component as a num\'eraire
and assigning it value $1$.
\end{proof}

\end{document}